\definecolor{blue}{RGB}{68,118,170} 
\definecolor{green}{RGB}{34,136,51} 
\definecolor{cyan}{RGB}{102,204,238}
\definecolor{coralred}{RGB}{238,102,119}
\definecolor{yellow}{RGB}{204,187,68}
\definecolor{purple}{RGB}{170,51,119}
\definecolor{grey}{RGB}{187,187,187}
\newcommand{\OMIT}[1]{}
\newcommand{\EP}[4]{
	\begin{center}
		\smallskip
		{\small 
			\begin{tabularx}{0.7\columnwidth}{l@{\hspace*{2mm}}l}
				\toprule
				\multicolumn{2}{c}{\sc{#2}}\label{#1} \\
				\midrule
				{\bf Given:}& \parbox[t]{0.5\columnwidth}{#3\vspace*{1mm}} \\
				{\bf Question:}& \parbox[t]{0.5\columnwidth}{#4\vspace*{.5mm}} \\ 
				\bottomrule
			\end{tabularx}
		}
		\smallskip
	\end{center}
}
\newtheorem{theorem}{Theorem}[section] 
\newtheorem{proposition}[theorem]{Proposition}
\newtheorem{claim}[theorem]{Claim}
\newtheorem{observation}[theorem]{Observation}
\theoremstyle{definition}
\newtheorem{example}{Example}
\newenvironment{claimproof}[1]{\par\noindent\textit{Proof:}\space#1}{\hfill $\lhd$\bigskip}
\newcommand{\p}{\ensuremath{\mathrm{P}}} 
\newcommand{\np}{\ensuremath{\mathrm{NP}}}
\newcommand{\conp}{\ensuremath{\mathrm{coNP}}}
\newcommand{\proofonlyif}{\smallskip\textit{Only if:\quad}}
\newcommand{\proofif}{\smallskip\textit{If:\quad}}
\newcommand{\myproblem}[1]{\textsc{#1}}
\def\AHG{AHG}
\newcommand{\minSF}{\textup{min-SF}}
\newcommand{\minEQ}{\textup{min-EQ}}
\newcommand{\minAL}{\textup{min-AL}}
\newcommand{\avgSF}{\textup{avg-SF}}
\newcommand{\avgEQ}{\textup{avg-EQ}}
\newcommand{\avgAL}{\textup{avg-AL}}
\newcommand{\val}{\mathrm{val}}
\newcommand{\util}{\mathrm{util}}
\def\EE{\mathcal{E}}
\newcommand{\friendaverage}[2]{\mathrm{avg}^F_{#1}(#2)}
\newcommand{\friendiaverage}[2]{\mathrm{avg}^{F+}_{#1}(#2)}
\newcommand{\friendmin}[2]{\mathrm{min}^F_{#1}(#2)}
\newcommand{\friendimin}[2]{\mathrm{min}^{F+}_{#1}(#2)}
\providecommand*{\cupdot}{%
  \mathbin{%
    \mathpalette\@cupdot{}%
  }%
}
\newcommand*{\@cupdot}[2]{%
  \ooalign{%
    $\m@th#1\cup$\cr
    \sbox0{$#1\cup$}%
    \dimen@=\ht0 %
    \sbox0{$\m@th#1\cdot$}%
    \advance\dimen@ by -\ht0 %
    \dimen@=.5\dimen@
    \hidewidth\raise\dimen@\box0\hidewidth
  }%
}
\providecommand*{\bigcupdot}{%
  \mathop{%
    \vphantom{\bigcup}%
    \mathpalette\@bigcupdot{}%
  }%
}
\newcommand*{\@bigcupdot}[2]{%
  \ooalign{%
    $\m@th#1\bigcup$\cr
    \sbox0{$#1\bigcup$}%
    \dimen@=\ht0 %
    \advance\dimen@ by -\dp0 %
    \sbox0{\scalebox{2}{$\m@th#1\cdot$}}%
    \advance\dimen@ by -\ht0 %
    \dimen@=.5\dimen@
    \hidewidth\raise\dimen@\box0\hidewidth
  }%
}
\title{Solving Four Open Problems about Core Stability in Altruistic Hedonic Games}
\author{J\"{o}rg Rothe
\institute{Institut f\"ur Informatik, MNF\\
Heinrich-Heine-Universit\"at D\"usseldorf\\
D\"usseldorf, Germany}
\email{rothe@hhu.de}
\and
Ildik\'{o} Schlotter
\institute{Institute of Economics\\
  HUN-REN Centre for Economic and Regional Studies\\
  Budapest, Hungary}
\email{schlotter.ildiko@krtk.hun-ren.hu}
}
\begin{document}
\maketitle

\begin{abstract}
  Hedonic games---at the interface of cooperative game theory and computational social choice---are coalition formation games in which the players have preferences over the coalitions they can join.
  Kerkmann \emph{et al.}~\cite{ker-ngu-rey-rey-rot-sch-wie:j:altruistic-hedonic-games} introduced \emph{altruistic} hedonic games where the players' utilities depend not only on their own but also on their friends' valuations of coalitions.
  The complexity of the verification problem for core stability has remained open in four variants of altruistic hedonic games: namely, for the variants with average- and minimum-based ``equal-treatment'' and ``altruistic-treatment'' preferences.
  We solve these four open questions by proving the corresponding problems $\conp$-complete;
  our reductions rely on rather intricate gadgets in the related networks of friends.
\end{abstract}

\section{Introduction}

Hedonic games are at the interface of two areas: \emph{cooperative game theory}---in particular, they are special coalition formation games (CFGs)---and \emph{computational social choice}---as the players in a hedonic game have preferences over the coalitions (i.e., subsets of the players) they can join.
The ultimate goal of a hedonic game (and of CFGs in general) is to form a coalition structure (i.e., a partition of the player set into coalitions) that is stable in some sense.
To study which coalition structures are likely to form, various stability notions have been introduced among which \emph{core stability} perhaps is the most natural and most central one.
Suppose that there is a coalition structure~$\Gamma$ in a hedonic game but a group~$C$ of players are not satisfied with their coalitions in~$\Gamma$: Instead, they would prefer leaving their current coalition in~$\Gamma$ so as to form a new coalition together.
In such a case, we say that \emph{$C$ blocks~$\Gamma$}, and a coalition structure is said to be \emph{core-stable} if it is not blocked by any coalition.
For more background on hedonic games, on core stability as well as other stability concepts for them, and on the computational complexity of the verification and existence problems for hedonic games with respect to various stability notions, we refer to the book chapters by Aziz and Savani~\cite{azi-sav:b:handbook-comsoc-hedonic-games} and Bullinger \emph{et al.}~\cite{bul-elk-rot:b-2nd-edition:economics-and-computation-cooperative-game-theory} and to the survey by Woeginger~\cite{woe:c:core-stability-hedonic-games}.

Specifically, we will study \emph{altruistic} hedonic games, which were introduced by Kerkmann \emph{et al.}~\cite{ker-ngu-rey-rey-rot-sch-wie:j:altruistic-hedonic-games} (see also its predecessor by Nguyen \emph{et al.}~\cite{ngu-rey-rey-rot-sch:c:altruistic-hedonic-games}).
As noted by them (and previously already in the survey by Rothe~\cite{rot:c:thou-shalt-love-thy-neighbor-as-thyself-when-thou-playest-altruism-in-game-theory}), their work was inspired in part by the---apparently unrelated---work of biologists: Hare and Woods~\cite{har-woo:b:survival-of-the-friendliest} complement Darwin's celebrated thesis of \emph{``survival of the fittest''} with a novel insight into how evolution works, put forward as their thesis of \emph{``survival of the friendliest.''}
Indeed, they collect data and arguments (e.g., by comparing bonobos and chimpanzees, two species of great apes) showing that evolutionary success can also arise from a \emph{friendlier} behavior.
Transferring this idea to game theory, \emph{altruistic} behavior in games can surpass aggressive selfishness that only cares about maximizing one's own utilities, regardless of the impact on others, especially so when a group of players need to coalesce.
As surveyed by Rothe~\cite{rot:c:thou-shalt-love-thy-neighbor-as-thyself-when-thou-playest-altruism-in-game-theory}, a variety of altruistic models have been studied for noncooperative and, to a lesser extent, cooperative games, and Kerkmann \emph{et al.}~\cite{ker-ngu-rey-rey-rot-sch-wie:j:altruistic-hedonic-games,ngu-rey-rey-rot-sch:c:altruistic-hedonic-games} were the first to study altruism in hedonic games.
Later on, Schlueter and Goldsmith~\cite{sch-gol:c:super-altruistic-hedonic-games} extended their idea when they studied \emph{``super-altruistic hedonic games,''} and Kerkmann \emph{et al.}~\cite{ker-cra-rot:j:altruism-in-coalition-formation-games,ker-rot:c:altruism-in-coalition-formation-games} transferred their ideas and models to CFGs in general.

If game theory aims at modeling the behavior of players in the real world, thus providing a theoretical framework to realistically support their decision-making, it must take altruistic behavior into account.
Indeed, the models of Kerkmann \emph{et al.}~\cite{ker-ngu-rey-rey-rot-sch-wie:j:altruistic-hedonic-games,ngu-rey-rey-rot-sch:c:altruistic-hedonic-games,ker-cra-rot:j:altruism-in-coalition-formation-games,ker-rot:c:altruism-in-coalition-formation-games} are based on the \emph{friend-oriented preference extension} due to Dimitrov \emph{et al.}~\cite{dim-bor-hen-sun:j:core-stability-hedonic-games} where players divide the other players into friends and enemies, yielding a network of friends (a simple, undirected graph whose edges represent friendship relations among the players) that allows for a compact representation of hedonic games.\footnote{Representing hedonic games \emph{compactly} is important, as each of $n$ players can join $2^{n-1}$ coalitions, and all players need to express their preferences on all coalitions they can join. For a large variety of ways of succinctly representing hedonic games, we refer to, e.g., the book chapters by Aziz and Savani~\cite{azi-sav:b:handbook-comsoc-hedonic-games} and Bullinger \emph{et al.}~\cite{bul-elk-rot:b-2nd-edition:economics-and-computation-cooperative-game-theory} and also the work of Kerkmann \emph{et al.}~\cite{ker-lan-rey-rot-sch-sch:j:hedonic-games-with-ordinal-preferences-and-thresholds}.}

Unlike in the model of Dimitrov \emph{et al.}~\cite{dim-bor-hen-sun:j:core-stability-hedonic-games}, however, in the models of Kerkmann \emph{et al.}~\cite{ker-ngu-rey-rey-rot-sch-wie:j:altruistic-hedonic-games,ngu-rey-rey-rot-sch:c:altruistic-hedonic-games,ker-cra-rot:j:altruism-in-coalition-formation-games,ker-rot:c:altruism-in-coalition-formation-games} the players' utilities from a coalition are not solely determined by their own valuations; rather, also the (average or minimum) valuations of their friends are taken into account.
Specifically, three degrees of altruism are defined that differ depending on the order in which players refer to their own or their friends' valuations: In the \emph{selfish-first model}~(SF), the players first look at their own and then at their friends' valuations---the latter only if they themselves are indifferent; in the \emph{equal-treatment model}~(EQ), they treat their own and their friends' valuations equally at the same time; and in the \emph{altruistic-treatment} model~(AL), the players first consider their friends' valuations, and only in the case of indifference they decide according to their own valuation which coalition they prefer.

The above papers provide long lists of related work.
Instead of repeating them here, we just highlight some of the work that is most closely related to ours because it also studies core stability in hedonic games, such as the papers by Banerjee \emph{et al.}~\cite{ban-kon-soe:j:core-in-simple-coalition-formation-game},
Dimitrov \emph{et al.}~\cite{dim-bor-hen-sun:j:core-stability-hedonic-games},
Alcade and Romero-Medina~\cite{alc-rom:j:coalition-formation-and-stability},
Woeginger~\cite{woe:j:hardness-for-core-stability-in-hedonic-games},
Peters~\cite{pet:c:precise-hed-games}, 
Ohta \emph{et al.}~\cite{oht-bar-ism-sak-yok:c:core-stability-in-hedonic-games-among-friends-and-enemies}, and
Chen \emph{et al.}~\cite{che-csa-roy-sim:c:hedonic-games-with-friends-enemies-and-neutrals}.
Bullinger and Kober~\cite{bul-kob:c:loyalty-in-cardinal-hedonic-games}
introduced the notion of \emph{loyalty in cardinal hedonic games}, and their loyal variant of symmetric friend-oriented hedonic games is nothing other than the minimum-based altruistic hedonic games under EQ preferences.

For average- and minimum-based SF altruistic hedonic games, Kerkmann \emph{et al.}~\cite{ker-ngu-rey-rey-rot-sch-wie:j:altruistic-hedonic-games} have shown that it is $\conp$-complete to verify whether a given coalition structure is core-stable, and Kerkmann \emph{et al.}~\cite{ker-cra-rot:j:altruism-in-coalition-formation-games} showed $\conp$-completeness of the verification problem for average- and minimum-based SF altruistic CFGs, leaving these questions open for the other two degrees of altruism: average- and minimum-based EQ and AL preferences in both altruistic hedonic games and, more generally, altruistic CFGs.
Note that EQ and AL preferences are particularly interesting, as they are ``more altruistic'' than SF preferences.
Recently, Hoffjan \emph{et al.}~\cite{hof-ker-rot:c:core-stability-in-altruistic-coalition-formation-games} solved these open questions for altruistic CFGs, again showing $\conp$-completeness of the verification problem for average- and minimum-based EQ and AL preferences.\footnote{In fact, Kerkmann \emph{et al.}~\cite{ker-cra-rot:j:altruism-in-coalition-formation-games} and Hoffjan \emph{et al.}~\cite{hof-ker-rot:c:core-stability-in-altruistic-coalition-formation-games} consider \emph{sum-based} (not average-based) SF, EQ, and AL preferences.
  However, these are equivalent to (and a bit simpler than) average-based SF, EQ, and AL preferences for altruistic CFGs, as each player has the same number of friends, no matter which two coalition structures are being compared, so the denominator in the average can simply be omitted, leaving just the sum of their friends' valuations; cf.\ (\ref{eq:friendaverage}) in Section~\ref{sec:preliminaries}.}

We solve all four questions that remained open along this line of research: We show that for average- and minimum-based EQ and AL altruistic hedonic games, the verification problem also is $\conp$-complete.
Our proofs are based on constructing rather involved gadgets in the networks of friends of these games.

\section{Preliminaries}
\label{sec:preliminaries}

The goal of a \emph{coalition formation game} (CFG) is to partition a
finite
set $N$ of players into \emph{coalitions}, i.e., subsets of~$N$, yielding a \emph{coalition structure}, i.e., a partition of~$N$; the \emph{set of all possible coalition structures over~$N$} is denoted by~$\mathcal{C}_N$. 
Given a coalition structure $\Gamma \in \mathcal{C}_N$, the \emph{coalition containing player~$i$} is denoted by~$\Gamma(i)$.
We are interested in studying hedonic games, which constitute a special type of CFGs.
A \emph{hedonic game}~$(N, \succeq)$ is specified by the player set $N$ and a preference profile ${\succeq} = (\succeq_1,\ldots,\succeq_n)$, where
  $n=|N|$ and
${\succeq_i} \subseteq \mathcal{N}^i \times \mathcal{N}^i$ is player~$i$'s preference relation (i.e., a complete, weak order) over $\mathcal{N}^{i} = \{C\subseteq N \mid i\in C\}$, the set of all coalitions containing~$i$.

Specifically, for any two coalitions $C, D \in \mathcal{N}^{i}$, we write $C \succeq_i D$ to mean that \emph{player $i$ weakly prefers~$C$ to~$D$}; we write $C \succ_i D$ to mean that \emph{$i$ prefers $C$ to~$D$} (i.e., $C \succeq_i D$ and not $D \succeq_i C$); and we write $C \sim_i D$ to mean that \emph{$i$ is indifferent between $C$ and~$D$} (i.e., $C \succeq_i D$ and $D \succeq_i C$).
Since the players' preferences only depend on the coalitions containing them, $\succeq_i$ also induces a weak preference ranking of player $i \in N$ over the coalition structures in~$\mathcal{C}_N$: For $\Gamma, \Delta \in \mathcal{C}_N$, $\Gamma \succeq_i \Delta $ if and only if $\Gamma(i) \succeq_i \Delta(i)$.
For a hedonic game~$(N, \succeq)$, a partition $\Gamma \in \mathcal{C}_N$, and a coalition $C\subseteq N$, let $\Gamma_{C\rightarrow\emptyset}$ denote the coalition structure that results from $\Gamma$ when all players $i$ in $C$ leave $\Gamma(i)$ to form a new coalition, $C$, while all other players remain in their coalition in~$\Gamma$.\footnote{
  The notation $\Gamma_{X \to Y}$ usually denotes the coalition structure obtained from $\Gamma$ when a set~$X$ of players joins some coalition~$Y$; hence, our notation $\Gamma_{C \to \emptyset}$ reflects the intuition that establishing a new coalition $C$ can be thought of as all members of $C$ joining the empty coalition~$\emptyset$.
}
Formally,
\[
\Gamma_{C \to \emptyset}=\{C\} \cup \{C' \setminus C \mid C' \in \Gamma, C' \not\subseteq C\}.
\]

We say that
\emph{a coalition~$C$ blocks~$\Gamma$} if 
$
C
  \succ_i \Gamma(i)$
 for all players $i \in C$, and $\Gamma$ is \emph{core-stable} if no nonempty coalition blocks~$\Gamma$.

Kerkmann \emph{et al.}~\cite{ker-ngu-rey-rey-rot-sch-wie:j:altruistic-hedonic-games} (see also~\cite{ngu-rey-rey-rot-sch:c:altruistic-hedonic-games}) introduced \emph{altruistic hedonic games} (AHGs), which have later been generalized by Kerkmann \emph{et al.}~\cite{ker-cra-rot:j:altruism-in-coalition-formation-games} (see also~\cite{ker-rot:c:altruism-in-coalition-formation-games}) to \emph{altruistic coalition formation games} (ACFGs).
Both AHGs and ACFGs are based on the \emph{friend-oriented preference extension} due to Dimitrov \emph{et al.}~\cite{dim-bor-hen-sun:j:core-stability-hedonic-games}, which allows for a compact representation: Every player divides the other players into friends and enemies, which yields a
\emph{network of friends}---a simple, undirected graph $G = (N,\EE)$ whose vertices are the players and whose edges represent mutual friendship relations, whereas a missing edge indicates that these two players are enemies.
For $i\in N$, let $F_i = \{j\in N \mid \{i,j\} \in \EE\}$ be the set of \emph{$i$'s friends} and $E_i = N\setminus (F_i \cup \{i\})$ be the set of \emph{$i$'s enemies}, and define \emph{$i$'s friend-oriented valuation for a coalition~$C \subseteq \mathcal{N}^i$} by
\[
\val_i(C) = n\cdot |F_i \cap C| - |E_i \cap C|;
\]
recall that
$n = |N|$.
Hence, for any two coalitions $C, D \in \mathcal{N}^{i}$, \emph{player~$i$ friend-orientedly prefers $C$ to $D$}, denoted as $C \succ_i^F D$, exactly if $C$ contains more friends of $i$ than~$D$, or in case $C$ and $D$ contain the same number of players among $i$'s friends, if $C$ contains fewer enemies of $i$ than~$D$.
Note that $C \succ_i^F D$ if and only if $\val_i(C) > \val_i(D)$.

In the altruistic model of Kerkmann \emph{et al.}~\cite{ker-ngu-rey-rey-rot-sch-wie:j:altruistic-hedonic-games}, the players' utilities from a coalition are not only determined by their own valuations, but also the (average or minimum) valuations of their friends in the same coalition are taken into account.
Specifically, they define the following three degrees of altruism:
\begin{itemize}
  \item In the \emph{selfish-first model}~(SF), the players' preferences mainly depend on their own valuations, consulting their friends' valuations only if they are indifferent between two coalitions;
  \item in the \emph{equal-treatment model}~(EQ), the players weigh their own and their friends' valuations equally; and
  \item in the \emph{altruistic-treatment} model~(AL), the players first consider their friends' valuations, consulting their own valuations only in the case of indifference between two coalitions.
\end{itemize}

Formally, setting the minimum of the empty set to zero by convention, define the \emph{(friend-oriented) average} and \emph{minimum value of player~$i$'s friends in a coalition $C \in \mathcal{N}^{i}$ (without and with~$i$)} by
\begin{align}
  \label{eq:friendaverage}
\friendaverage{i}{C} &= \sum_{c \in C \cap F_i} \frac{\val_c(C)} {|C \cap F_i|}
&\textnormal{and }\qquad
\friendiaverage{i}{C} &=
\sum_{c \in (C \cap F_i) \cup \{i\}} \frac{\val_c(C)} {|(C \cap F_i)\cup\{i\}|},\\
  \label{eq:friendmin}
\friendmin{i}{C} &= \underset{ c \in C \cap F_i } \min \{ \val_c(C)\}
&\textnormal{and}\qquad
\friendimin{i}{C} &= \underset{c \in (C \cap F_i ) \cup \{i\}} \min \{ \val_c(C)\}.
\end{align}

Now, define player~$i$'s average-based and minimum-based \emph{utilities} from a coalition $C \in \mathcal{N}^{i}$ in an AHG according to the above three degrees of altruism:
\begin{linenomath}
\begin{align*}
  \util_{i}^\avgSF(C) &= w \cdot \val_i(C) + \friendaverage{i}{C}
  &\textnormal{and}\qquad
  \util_{i}^\minSF(C) &= w \cdot \val_i(C) + \friendmin{i}{C},\\
  \util_{i}^\avgEQ(C) &= \friendiaverage{i}{C}
  &\textnormal{and}\qquad
  \util_{i}^\minEQ(C) &= \friendimin{i}{C}, \textnormal{ and}\\
  \util_{i}^\avgAL(C) &=\val_i(C) + w \cdot \friendaverage{i}{C}
  &\textnormal{and}\qquad
  \util_{i}^\minAL(C) &=\val_i(C) + w \cdot \friendmin{i}{C},
\end{align*}
\end{linenomath}
where $w\geq n^4$ is a constant weight on some of the valuations that ensures that the SF utility is first determined by $i$'s valuation and the AL utility is first determined by $i$'s friends' valuations~\cite{ker-ngu-rey-rey-rot-sch-wie:j:altruistic-hedonic-games}.
Again, since players due to their hedonism care only about the coalitions containing them, their utilities from coalitions immediately induce their \emph{utilities from coalition structures}: $\util_{i}^\avgSF(\Gamma) = \util_{i}^\avgSF(\Gamma(i))$, etc.

Depending on which aggregation method and which degree of altruism is used, we denote our games as \emph{average-based} or \emph{min-based SF}, \emph{EQ}, or \emph{AL AHGs}.
We study the computational complexity of the \emph{verification problem} for them: 

\EP{VCSAHG}{Verification-of-Core-Stability-in-Altruistic-Hedonic-Games}
   {An AHG~$I$ and a coalition structure~$\Gamma$.}
   {Is $\Gamma$ core-stable for~$I$?}

\OMIT{
\begin{center}
\noindent
\fbox{
\begin{minipage}{0.9\textwidth}
\vspace{3pt}
    {\bf Verification of core stability in altruistic hedonic games:}
    \\[2pt]
    \begin{tabular}{l@{\hspace{5pt}}p{13cm}}
    {\bf Input:} &
    An AHG~$I$ and a coalition structure~$\Gamma$.
    \\
    {\bf Question:} & Is $\Gamma$ core-stable for~$I$?
    \end{tabular}
\end{minipage}
}
\end{center}
} 

\tikzset{
enemyarc/.style={ 
	color=coralred, 
	densely dashed,
	line width=1.8pt,
	}
}
\tikzset{
friendarc/.style={ 
	line width=1.8pt,
	}
}

\begin{figure*}[t]
\centering
\begin{subfigure}[c]{0.3\textwidth}
\centering
\begin{tikzpicture}
\node[inner sep=1.5pt] (a) at (0,0) {};
\node[inner sep=1.5pt] (b) at (1,0.7) {};
\node[inner sep=1.5pt] (c) at (1,-0.7) {};
\node[inner sep=1.5pt] (d) at (2,0) {};
\node[inner sep=1.5pt] (e) at (3,0) {};

\node[left, xshift=-2pt] at (a) {$a$};
\node[above, yshift=2pt] at (b) {$b$};
\node[below, yshift=-2pt] at (c) {$c$};
\node[below, yshift=-1pt] at (d) {$d$};
\node[below, yshift=-2pt] at (e) {$e$};

\draw[friendarc] (a)--(b);
\draw[friendarc] (a)--(c);
\draw[friendarc] (b)--(c);
\draw[friendarc] (b)--(d);
\draw[friendarc] (c)--(d);
\draw[friendarc] (d)--(e);

\fill (a) circle[radius=3pt,inner sep=1.5pt];
\fill (b) circle[radius=3pt,inner sep=1.5pt];
\fill (c) circle[radius=3pt,inner sep=1.5pt];
\fill (d) circle[radius=3pt,inner sep=1.5pt];
\fill (e) circle[radius=3pt,inner sep=1.5pt];
\end{tikzpicture}
\label{fig:example-graph}
\end{subfigure}
\hspace{10pt}
\begin{subfigure}[c]{0.62\textwidth}
\begin{tabular}{lccccc}
\toprule
& $a$ & $b,c$ & $d$ & $e$ \\
\midrule
\\[-10pt]
$\val_i(\Gamma)$ & 8 & 14 & 14 & 2 
\\[2pt]
\midrule
\\[-10pt]
$\util_{i}^\avgSF(\Gamma)$ & $8w+14$ & $14w+12$ & $14w+10$ & $2w+14$ 
\\[4pt]
$\util_{i}^\avgEQ(\Gamma)$ & $12$ & $12.5$ & $11$ & $8$ 
\\[4pt]
$\util_{i}^\avgAL(\Gamma)$ & $14w+8$ & $12w+14$ & $10w+14$ & $14w+2$ 
\\[2pt]
\midrule
\\[-10pt]
$\util_{i}^\minSF(\Gamma)$ & $8w+14$ & $14w+8$ & $14w+2$ & $2w+14$ 
\\[4pt]
$\util_{i}^\minEQ(\Gamma)$ & $8$ & $8$ & $2$ & $2$ 
\\[4pt]
$\util_{i}^\minAL(\Gamma)$ & $14w+8$ & $8w+14$ & $2w+14$ & $14w+2$
\\
\bottomrule
\end{tabular}
\end{subfigure}
\caption{Illustration for Example~\ref{ex:defs}.
The network of friends is depicted to the left.
The table on the right contains players' valuations and utilities in the coalition structure~$\Gamma=\{N\}$ under the six preference models considered.}
\label{fig:example}
\end{figure*}

\begin{example}
\label{ex:defs}
Consider a game~$I$ containing a set $N=\{a,b,c,d,e\}$ of players whose network of friends is depicted in Figure~\ref{fig:example}.
Note that $b$ and~$c$ have the same friends and enemies in~$N$, and thus have the same valuations and utilities as long as they are contained in the same coalition.

The coalition~$C=\{a,b,c,d\}$ yields the valuations and utilities for its players as shown in Table~\ref{tab:example}, where we fix some constant weight $w\geq n^4 = 5^4 = 625$ ensuring the right order for SF and AL utilities.
Consider the coalition structure~$\Gamma = \{N\}$ containing only the grand coalition~$N$, which consists of all players.
Observe that $C$ blocks $\Gamma$ in both the minimum-based EQ and AL models, but does not block~$\Gamma$ under the remaining four models.

\def\yes{\textcolor{green}{\ding{52}}}
\def\no{\textcolor{purple}{\ding{55}}}
\begin{table}[h]
\begin{center}
\begin{tabular}{lccccc}
\toprule
& $a$ & $b,c$ & $d$ & Does $C$ block~$\{N\}$? \\
\midrule
\\[-10pt]
$\val_i(C)$ & 9 & 15 & 9 
\\[2pt]
\midrule
\\[-10pt]
$\util_{i}^\avgSF(C)$ & $9w+15$ \yes & $15w+11$ \yes & $9w+15$ \no  & no 
\\[4pt]
$\util_{i}^\avgEQ(C)$ &  $13$ \yes & $12$ \no & $13$ \yes & no
\\[4pt]
$\util_{i}^\avgAL(C)$ & $15w+9$ \yes & $11w+15$ \no & $15w+9$ \yes & no
\\[2pt]
\midrule
\\[-10pt]
$\util_{i}^\minSF(C)$ & $9w+15$ \yes & $15w+9$ \yes & $9w+15$ \no & no 
\\[4pt]
$\util_{i}^\minEQ(C)$ & $9$ \yes & $9$ \yes & $9$ \yes & yes
\\[4pt]
$\util_{i}^\minAL(C)$ & $15w+9$ \yes & $9w+15$ \yes & $15w+9$ \yes & yes
\\
\bottomrule
\end{tabular}
\end{center}
\caption{The valuations and utilities of players in the coalition~$C=\{a,b,c,d\}$. The sign \yes\ means that the given player~$i \in N$ prefers~$C$ to~$N$ under the given preference model, i.e., $C \succ_i N$, while
the sign \no\ means that $N \succ_i C$.}
\label{tab:example}
\end{table}

This shows that $\Gamma$ is \emph{not} in the core of~$I$ if $I$ is interpreted as a min-based EQ or AL AHG. 
By contrast, it is not hard to verify that $\Gamma$ is in the core of~$I$ if $I$ is interpreted as an average-based SF, EQ, or AL AHG or a min-based SF AHG.
\end{example}

Since our AHGs are compactly represented by their  (undirected) friendship graphs~${G=(N,\EE)}$, we also need some graph-theoretic notation.
For any subset $M \subseteq N$ of the vertices of~$G$, let $G[M]$ denote the \emph{subgraph of $G$ induced by~$M$}.
For each vertex $i \in N$, let $\delta(i) \subseteq \EE$ denote the set of edges incident to~$i$ in~$G$.
Note that $\delta(i) = \{\{i,j\} \mid j \in F_i\}$ gives player~$i$'s friendship relations in the game.

Referring to some standard textbooks~\cite{gar-joh:b:int,pap:b:complexity,rot:b:cryptocomplexity}, we assume the reader to be familiar with the basic notions of computational complexity theory, such as the complexity classes $\p$ (\emph{deterministic polynomial time}), $\np$ (\emph{nondeterministic polynomial time}), and $\conp = \{\overline{L} \mid L \in \np\}$ (the \emph{class of complements of $\np$ problems}).
Our reductions used to show $\conp$-hardness are based on the standard \emph{polynomial-time many-to-one reducibility}, and problems that are $\conp$-hard and in $\conp$ are said to be \emph{$\conp$-complete}.

\section{Core Stability in AHGs}

Each of our reductions is from the $\np$-complete  problem \myproblem{Clique} where we are given a graph~$H=(V,E)$ with an integer~$k$, and the question is whether $H$ contains a clique of size~$k$, i.e., a complete graph on~$k$ vertices, as a subgraph.
Before delving into the details of our proofs, let us sketch a high-level description of the approach shared by all of them. 

Given an instance $(H,k)$ of \textsc{Clique}   with input graph~$H=(V,E)$, each of our reductions defines an AHG~$I$ and a coalition structure~$\Gamma$ for~$I$ so that $H$ has a clique of size~$k$ if and only if $\Gamma$ is \emph{not} core-stable for~$I$. 
These reductions have a common approach: They construct
\emph{vertex}, \emph{edge}, and \emph{incidence gadgets} reflecting the structure of~$H$. Namely, for each vertex~$v \in V$, for each edge~$e \in E$, and for each pair~$(v,e)$ where $v$ is a vertex incident to some edge~$e$ in~$H$, 
we define a certain \emph{gadget}.
Although vertex, edge, and incidence gadgets may differ from each other, each of these gadgets contains a distinguished player that represents the given vertex~$v \in V$, edge~$e \in E$, or pair $(v,e) \in \{(v,e \mid v \in V, e \in \delta(v)\}$; this player is called the \emph{vertex}, \emph{edge}, or \emph{incidence player} corresponding to~$v$, to~$e$, or to $(v,e)$, respectively. 

The importance of distinguished players comes from the fact that, in each gadget, only the distinguished player has friends outside the gadget. 
As each of the gadgets forms a coalition in the coalition structure~$\Gamma$ for the constructed AHG, the properties of the gadgets will ensure that from among all players in the constructed gadgets, only distinguished players might be willing to form a coalition~$C$ that blocks~$\Gamma$. For this to happen, the distinguished players need not only to have enough friends within~$C$ but, due to the altruistic preference models we use, their friends in~$C$ also need a certain number of friends within~$C$. 
Moreover, some of the distinguished players will only be able to achieve a better utility value in~$C$ than in~$\Gamma$ if they have fewer enemies in~$C$ than they have within their gadget; hence, such players will impose an upper bound on the size of~$C$.

Therefore, the computational intractability of finding a blocking coalition is caused by the need to ensure a lower bound on the number of friends for certain players in the blocking coalition~$C$, while also respecting an upper bound on the size of~$C$.
Translating this into graph-theoretic terminology, the task is to find a subgraph in the network of friends that satisfies certain lower bounds on vertex degrees while respecting an upper bound on the number of vertices in the subgraph.
The following observation offers a flexible way to deduce the presence of a clique from such requirements.

\begin{observation}
\label{obs:clique}
If a simple graph~$H=(V,E)$ satisfies $|V|+\alpha|E|\leq k+\alpha\binom{k}{2}$ for some $\alpha\geq 1$ and each of its vertices has degree at least~$k-1$, then $H$ is a clique of size~$k$.
\end{observation}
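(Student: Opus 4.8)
The plan is to combine the two hypotheses into a pair of matching bounds on $|V|$. Write $n=|V|$ and $m=|E|$; the goal is to show that $n=k$ and that $H$ is complete, which together say exactly that $H$ is a clique of size~$k$. The only arithmetic input I need from the degree condition is the sum-of-degrees estimate: since every vertex has degree at least $k-1$, we get $2m=\sum_{v\in V}\deg(v)\ge n(k-1)$, and hence $m\ge n(k-1)/2$.

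First I would feed this lower bound on $m$ into the hypothesis $n+\alpha m\le k+\alpha\binom{k}{2}$. Since $\alpha\ge 1>0$, replacing $m$ by its lower bound only decreases the left-hand side, yielding
\[
n+\alpha\,\frac{n(k-1)}{2}\ \le\ k+\alpha\,\frac{k(k-1)}{2}.
\]
Rearranging, this is $(n-k)\bigl(1+\tfrac{\alpha(k-1)}{2}\bigr)\le 0$. The bracketed factor is at least~$1$, because $\alpha\ge 1$ and $k\ge 1$ make the summand $\alpha(k-1)/2$ nonnegative, so it is strictly positive; dividing through forces $n\le k$.

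For the reverse inequality I would invoke the degree hypothesis directly: in a simple graph on $n$ vertices no vertex can have degree exceeding $n-1$, so the requirement $\deg(v)\ge k-1$ for a (nonempty) vertex set forces $n-1\ge k-1$, i.e.\ $n\ge k$. Combining the two bounds gives $n=k$. Finally, with $n=k$ every vertex satisfies $k-1\le\deg(v)\le n-1=k-1$, so each degree equals $n-1$; a simple graph in which every vertex has degree $n-1$ is complete, whence $H=K_k$ is a clique of size~$k$.

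I do not expect a genuine obstacle here---the argument is a short extremal counting computation. The one place that demands care is the sign of the factor $1+\alpha(k-1)/2$: the hypothesis $\alpha\ge 1$ is precisely what guarantees positivity (and lets me divide without reversing the inequality), and it is also the reason a weaker weighting of $|E|$ relative to $|V|$ would not be enough to pin down $n\le k$. I would also record the harmless standing assumption $k\ge 1$ with $V\neq\emptyset$, since the empty graph meets the degree condition vacuously.
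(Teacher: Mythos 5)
Your proof is correct and takes essentially the same route as the paper: both arguments combine the handshake bound $2|E|\ge |V|(k-1)$ with the hypothesis inequality to force $|V|=k$ (the paper phrases this as equality throughout the chain $|V|+\alpha|E|\ge k+\alpha\binom{k}{2}$, which pins down $|E|=\binom{k}{2}$ directly, while you isolate $|V|\le k$ by factoring and then finish via the degree condition). One aside in your commentary is slightly off: positivity of the factor $1+\alpha(k-1)/2$ needs only $\alpha\ge 0$ together with $k\ge 1$, so the hypothesis $\alpha\ge 1$ is not what makes the argument work---in fact your proof establishes the observation for every $\alpha\ge 0$.
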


\begin{proof}
  Let $v$ be any vertex in the graph.
  Since $v$ has degree at least~$k-1$, the graph contains at least~$k$ vertices.
  Since each of them has degree at least~$k-1$, the number of edges is at least $\frac{k(k-1)}{2}$, so we get $|V|+\alpha|E| \geq k+\alpha\binom{k}{2}$.
  Thus all of these inequalities must hold with equality, so the number of edges must be \emph{exactly}~$\binom{k}{2}$, and the number of vertices must
  be \emph{exactly}~$k$. This means that the graph is indeed a clique of size~$k$.
\end{proof}

\subsection{Min-Based EQ and AL AHGs}

We show that verifying core stability in min-based AHGs is $\conp$-complete, using the same construction for both min-based EQ and min-based AL preferences.

\begin{theorem}
\label{thm:minEQAL-corestable-coNPc}
Verifying core stability in a min-based EQ or AL  \AHG\ is $\conp$-complete.
\end{theorem}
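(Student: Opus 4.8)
The plan is to establish membership in $\conp$ and then prove $\conp$-hardness via a reduction from \textsc{Clique}. Membership is the easy direction: the complementary problem---deciding whether $\Gamma$ is \emph{not} core-stable---lies in $\np$, since a nondeterministic algorithm can guess a nonempty coalition~$C$ and verify in polynomial time that $C \succ_i \Gamma(i)$ holds for every $i \in C$; computing each $\util_i^{\minEQ}(C)$ or $\util_i^{\minAL}(C)$ value requires only evaluating minima over friend valuations, which is polynomial. Hence the verification problem lies in $\conp$, and it remains to prove hardness.

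For the reduction, given an instance $(H,k)$ of \textsc{Clique} with $H=(V,E)$, I would build a network of friends containing one vertex gadget per $v \in V$, one edge gadget per $e \in E$, and one incidence gadget per incident pair $(v,e)$, together with a coalition structure~$\Gamma$ in which each gadget forms its own coalition. The design goal is to arrange friendship edges and gadget-internal enemies so that: (i)~no non-distinguished player can strictly improve its min-based utility by leaving its gadget, so that any blocking coalition~$C$ consists only of distinguished (vertex, edge, and incidence) players; (ii)~a distinguished player joins~$C$ only if it gains enough friends there and---crucially for the min-based models---only if \emph{each} of its friends in~$C$ also attains a sufficiently high valuation, i.e., a sufficiently high friend-degree inside~$C$; and (iii)~certain distinguished players can improve only by shedding gadget-internal enemies, which caps~$|C|$ from above. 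The incidence gadgets serve as the ``glue'' enforcing that a selected vertex player and a selected edge player are linked in~$C$ exactly when $v$ is an endpoint of~$e$ in~$H$.

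Correctness would then be argued in two directions. For the forward direction, a $k$-clique in~$H$ yields a blocking coalition~$C$ by taking the distinguished players corresponding to the $k$~clique-vertices, the $\binom{k}{2}$~clique-edges, and their incidences; a direct computation shows each such player attains strictly higher min-based EQ (resp.\ AL) utility in~$C$ than in its gadget. For the converse, I would start from an arbitrary blocking coalition~$C$; by~(i) it contains only distinguished players, and by~(ii)--(iii) the vertex and edge players it contains, viewed as a subgraph~$H'$ of~$H$, must satisfy a minimum-degree bound of~$k-1$ together with a weighted size bound of the form $|V(H')| + \alpha|E(H')| \leq k + \alpha\binom{k}{2}$. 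Observation~\ref{obs:clique} then forces~$H'$ to be a clique of size~$k$, completing the equivalence. Because both $\util_i^{\minEQ}$ and $\util_i^{\minAL}$ are governed by the minimum friend valuation (with~$i$ itself included under EQ, excluded under AL, and $w \geq n^4$ making the friends' minimum dominate under AL), the same gadgets and the same~$\Gamma$ should serve both models once the valuation thresholds are chosen uniformly.

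The hardest part will be the gadget engineering behind requirement~(ii): in min-based preferences a player's utility is controlled by the \emph{weakest} friend in the coalition, so improving a distinguished player's utility demands that every one of its friends simultaneously reach a high friend-degree in~$C$. Calibrating the gadgets---the friend-counts, the internal enemy-counts, and the incidence linkages---so that these simultaneous lower bounds on degree, combined with a single tight upper bound on~$|C|$, translate \emph{precisely} into the clique condition (and leave no spurious blocking coalitions) is the delicate core of the argument. A secondary subtlety is verifying that the structural difference between EQ (which includes~$i$'s own valuation in the minimum) and AL (which weights the friends' minimum by~$w$) does not break the equivalence; I expect to handle this by choosing the valuation gaps large enough that~$i$'s own value is never the binding minimum in either model.
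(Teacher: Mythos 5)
Your proposal reproduces the paper's high-level strategy (which the paper itself sketches before the theorem: gadgets per vertex, edge, and incidence pair; degree lower bounds plus a size upper bound; Observation~\ref{obs:clique}), but it stops exactly where the proof begins. Everything in your items (i)--(iii) is a \emph{design goal}, not a construction, and you concede this yourself by calling the gadget calibration ``the delicate core of the argument.'' Without concrete gadgets there is nothing to verify, so as it stands this is a plan for a proof rather than a proof.

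Concretely, the missing content is substantial and nonobvious. The paper's gadget is a $(k-1,k')$-circulant: $k'=k\binom{k}{2}+k+1$ players on a cycle, each befriending the $k-1$ nearest players (this forces the preprocessing step making $k$ odd, which you never address). This structure is what makes your goal~(i) provable: if a blocking coalition contained a non-distinguished gadget player, either some path endpoint on the base cycle would have fewer than $k-1$ friends in~$C$ (killing the min-based utility of all its friends), or the whole gadget would be in~$C$ and then $|C|\geq k'$ caps every member's utility at its $\Gamma$-value (Claim~\ref{clm:circulant}). Second, your goal~(ii) creates a problem your proposal does not notice, let alone solve: in the min-based models \emph{every} player of a blocking coalition needs at least $k-1$ friends inside it (Claim~\ref{clm:min-degree}), yet an incidence player has only two distinguished friends ($v'$ and~$e'$) and an edge player only two incidence friends, so no blocking coalition could ever exist over distinguished players alone. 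The paper fixes this with $k-3$ dummy players per edge, forming a clique and befriending $e'$, $b_{x,e}$, $b_{y,e}$; this padding simultaneously makes the size accounting come out as $|C|\geq |C\cap V'|+k|C\cap E'|$, which is why Observation~\ref{obs:clique} is applied with $\alpha=k$ and why $k'$ is chosen as $k\binom{k}{2}+k+1$. None of these interlocking parameter choices can be recovered from your outline, and your proposed handling of the EQ/AL difference (``choose valuation gaps so $i$'s own value is never the binding minimum'') is also not how the uniformity is actually achieved: the paper makes all players within each gadget coalition, and all players within the exhibited blocking coalition, have \emph{identical} valuations, so the EQ and AL comparisons coincide automatically.
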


\begin{proof}
  Both problems clearly are in $\conp$ because one can verify in polynomial time that a given coalition is blocking.
  To prove their $\conp$-hardness, we present a reduction from \myproblem{Clique} to their complements.
  Let our input for \myproblem{Clique} be the graph $H=(V,E)$ and integer~$k$.
  If $k$ is even, then we add an additional vertex connected to every vertex of~$H$; the obtained graph has a clique of size~$k+1$ if and only if $H$ contains a clique of size~$k$.
  Therefore, without loss of generality, we may assume that $k$ is odd.

\smallskip
\noindent
{\bf Construction.}
We construct from $(H,k)$ a min-based EQ or AL AHG over player set~$N$ and an underlying friendship graph~$G$.
Let us start by defining a \emph{$(k-1,k')$-circulant gadget} for $k'=k\binom{k}{2}+k+1$ as follows:
It contains~$k'$ players arranged along a cycle~$Q$ of length~$k'$, and each player~$i$ on~$Q$ is friends with those $k-1$ players on~$Q$ who are at a distance of at most $\frac{k-1}{2}$ away from~$i$ along $Q$ (not including~$i$ itself); see Figure~\ref{fig:circulant} for an illustration.
We refer to~$Q$ as the \emph{base cycle} of the gadget. 

We now introduce vertex, edge, and incidence gadgets; each of these gadgets will be a $(k-1,k')$-circulant gadget. 
For each vertex $v \in V$, we add a \emph{vertex gadget} over player set~$P_v$, with a special \emph{vertex player}~$v' \in P_v$ corresponding to the vertex~$v$. 
Similarly, for each edge $e \in E$, we introduce an \emph{edge gadget} over player set~$P_e$,
containing the \emph{edge player}~$e' \in P_e$ corresponding to the edge~$e$. 
Next, for each vertex~$v \in V$ and for each edge
$e \in \delta(v)$, we introduce an \emph{incidence gadget} over player set~$P_{v,e}$, containing the \emph{incidence player}~$b_{v,e}$. 
Additionally, we define a set~$A_e$ of $k-3$ \emph{dummy players} for each edge~$e \in E$.
We will use the notation $V'=\{v' \mid v \in V\}$, $E'=\{e' \mid e \in E\}$, 
$B=\{b_{v,e} \mid v \in V,\ e \in \delta(v)\}$, and
$A=\bigcup_{e \in E} A_e$.
The total set of players is
\[
N=\bigcupdot_{v \in V} P_v \cup \bigcupdot_{e \in E} \left(P_e \cup A_e\right) \cup \bigcupdot_{v \in V,\ e \in \delta(v)} P_{v,e}.
\]

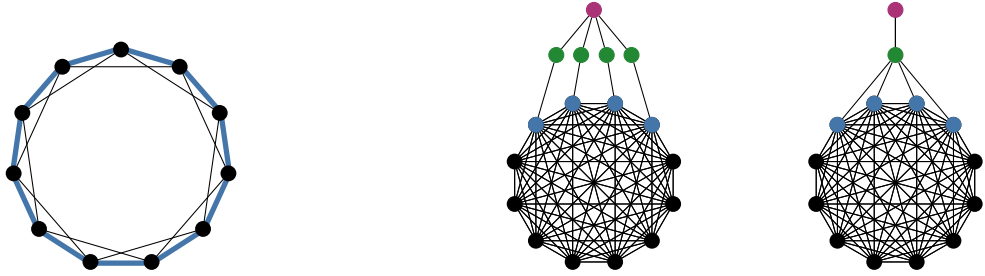
\begin{figure*}[t]
\centering
\begin{subfigure}[t]{0.4\textwidth}
\centering
\begin{tikzpicture}
\node[draw=none,line width=2pt,minimum size=80pt,regular polygon,regular polygon sides=11] (a) {};
\node[draw,color=blue,line width=2pt,minimum size=83pt,regular polygon,regular polygon sides=11] (b) {};
\foreach \x in {1,2,...,11}
  \fill (a.corner \x) circle[radius=3pt,inner sep=1.5pt];
\foreach \x in {3,4,...,11}
  \draw (a.corner \x)--(a.corner \the\numexpr\x-2);
\draw (a.corner 10)--(a.corner 1); 
\draw (a.corner 11)--(a.corner 2);
\end{tikzpicture}
\caption{The friendship graph of a $(k-1,k')$-circulant gadget for $k=5$ with $k'=11$ players, with the base cycle shown with bold, blue lines.}
\label{fig:circulant}
\end{subfigure}
\hspace{0.5cm}
\begin{subfigure}[t]{0.55\textwidth}
\centering
\begin{tikzpicture}
\node[draw=none,color=blue,line width=2pt,minimum size=60pt,regular polygon,regular polygon sides=12] (a) {};
\foreach \x in {1,2,...,12}
  \foreach \y in {1,2,...,12}
	  \draw (a.corner \x)--(a.corner \y);
\foreach \x in {1,2,...,12}
  \fill (a.corner \x) circle[radius=3pt,inner sep=2pt];

\node[circle,draw,fill,color=green,radius=3pt,inner sep=2pt] at (0.17,1.7) (m3) {};
\node[circle,draw,fill,color=green,radius=3pt,inner sep=2pt] at (-0.17,1.7) (m2) {};
\node[circle,draw,fill,color=green,radius=3pt,inner sep=2pt] at (0.5,1.7) (m4) {};
\node[circle,draw,fill,color=green,radius=3pt,inner sep=2pt] at (-0.5,1.7) (m1) {};

\node[circle,draw,fill,color=purple,radius=3pt,inner sep=2pt] at (0,2.3) (t) {};

\draw (a.corner 3)--(m1)--(t);
\draw (a.corner 2)--(m2)--(t);
\draw (a.corner 1)--(m3)--(t);
\draw (a.corner 12)--(m4)--(t);

\foreach \x in {1,2,3,12}
  \fill[color=blue] (a.corner \x) circle[radius=3pt];
\end{tikzpicture}
\hspace{1.5cm}
\begin{tikzpicture}
\node[draw=none,color=blue,line width=2pt,minimum size=60pt,regular polygon,regular polygon sides=12] (a) {};
\foreach \x in {1,2,...,12}
  \foreach \y in {1,2,...,12}
	  \draw (a.corner \x)--(a.corner \y);
\foreach \x in {1,2,...,12}
  \fill (a.corner \x) circle[radius=3pt,inner sep=2pt];

s\node[circle,draw,fill,color=green,radius=3pt,inner sep=2pt] at (0,1.7) (m0) {};

\node[circle,draw,fill,color=purple,radius=3pt,inner sep=2pt] at (0,2.3) (t) {};

\draw (a.corner 3)--(m0)--(t);
\draw (a.corner 2)--(m0)--(t);
\draw (a.corner 1)--(m0)--(t);
\draw (a.corner 12)--(m0)--(t);

\foreach \x in {1,2,3,12}
  \fill[color=blue] (a.corner \x) circle[radius=3pt];
\end{tikzpicture}
\caption{The friendship graphs of a normal (i.e., non-pinched) and a pinched $(d,k')$-dome gadget for $d=4$ with $k'=17$ players. The top, mid, and fringe players are depicted as purple, green, and blue circles, respectively.}
\label{fig:dome}
\end{subfigure}
\caption{Illustration of our gadgets in the proofs of Theorems~\ref{thm:minEQAL-corestable-coNPc}, \ref{thm:avgEQ-corestable-coNPc}, and~\ref{thm:avgAL-corestable-coNPc}.}
\end{figure*}

Besides the friendships within gadgets, 
we let each incidence
player~$b_{v,e} \in B$ be friends with the vertex player~$v'$ and the edge player~$e'$, so $G[V' \cup E' \cup B]$ 
can be obtained from~$H$ by subdividing each of its edges once.
Next, for each edge $e=xy \in E$, we let the dummy players in~$A_e$ be friends with players~$e', b_{x,e}$, and~$b_{y,e}$.
Finally, for each edge~$e \in E$, we let $G[A_e]$ be a clique of size~$k-3$,
so all players in~$A_e$ are friends with each other.
This completes the definition of our \AHG.

Let us define the coalition structure~$\Gamma$ as 
\[
\Gamma = \{P_v \mid v \in V\} \cup \{P_e \mid e \in E\}  \cup \{P_{v,e} \mid v \in V,e \in \delta(v)\} \cup \{A_e \mid e \in E\},
\]
so $\Gamma$ contains the player set of each vertex, edge, and incidence gadget as well as the set of dummy players associated with any edge as a coalition.
Note that, for $n=|N|$,
each player~$i$'s valuation of~$\Gamma$ is 
\[
\val_i(\Gamma) = \left\{
\begin{array}{ll}
(k-1) \cdot n-(k'-k) & \textrm{ if } i \in N \setminus A,
\\
(k-4) \cdot n & \textrm{ if } i \in A.
\end{array}
\right.
\]
Therefore, both in the min-based EQ and AL models, the utility of each player equals its valuation.

\smallskip
\noindent
{\bf Proof of correctness.}
We claim that $\Gamma$ is \emph{not} core-stable if and only if $H$ contains a clique of size~$k$.

\proofonlyif
Let us first assume that $\Gamma$ is \emph{not} core-stable; we show that $H$ contains a clique of size~$k$. 
The following claim captures the key property of our gadgets. 

\begin{claim}
\label{clm:circulant}
Let $P$ be
a set of players of a $(k-1,k')$-circulant gadget in the constructed instance whose unique player having friends outside~$P$ is $p^\star$ (a vertex, edge, or incidence player).
If some coalition $C \subseteq N$ blocks~$\Gamma$, then $C \cap P \subseteq \{p^\star\}$. 
\end{claim}

\begin{claimproof}
Let us show first that each player~$i \in C \cap (P \setminus \{p^\star\})$ must have at least~$k-1$ friends in~$C$. 
Suppose for the sake of contradiction that $i$ has at most~$k-2$ friends in~$C$.
On the one hand, if $i$ has a friend~$j \in C$, then $j \in C\cap P$ and,
for $\varphi \in \{\minEQ,\minAL\}$,
the utility of~$j$ in~$\Gamma_{C \to \emptyset}$ is at most the valuation of~$i$:
\[
\util_j^\varphi(\Gamma_{C \to \emptyset}) \leq
\val_i(\Gamma_{C \to \emptyset}) \leq
(k-2) \cdot n <
\util_j^\varphi(\Gamma),
\] 
which contradicts $j \in C$.
On the other hand, if $i$ has no friends in~$C$, then $\util_i^\varphi(\Gamma_{C \to \emptyset})=0<\util_i^{\varphi}(\Gamma)$, which contradicts $i \in C$. 

Now, to see the statement of the claim, assume for the sake of contradiction that $C$ contains some player~$i \in P \setminus \{p^\star\}$.
Let $Q$ denote the base cycle of the $(k-1,k')$-circulant gadget with players~$P$.
If $P \not\subseteq C$
were true, then consider the path in~$Q[C]$ that contains~$i$ (a subpath of the base cycle);  
let player~$j$ be one endpoint of this path that does not coincide with~$p^{\star}$ (such a player~$j$ exists because $i \neq p^\star$).
Then~$j$ has
fewer than~$k-1$ friends in~$C$ because one of
$j$'s friends, though adjacent to~$j$ on~$Q$, is not in~$C$.
This is not possible by the argument given in the previous paragraph, so $P \subseteq C$ follows.

To see that $P \subseteq C$ is not possible either, consider some player~$i$ in~$P \setminus\{p^\star\}$ that is not a friend of~$p^\star$; by $k'>k$, such a player exists.
Assuming that $P \subseteq C$,
we get that $i$ and all friends of~$i$ have valuation exactly~$(k-1)n - (|C|-k)$ in~$\Gamma_{C \to \emptyset}$, which is at most $(k-1)n-(k'-k)=\util^{\varphi}_i(\Gamma)$  by $|C| \geq k'$.
This contradicts $i \in P \subseteq C$.
\end{claimproof}

We will also need the following simple claim.

\begin{claim} 
\label{clm:min-degree}
Let $C \subseteq N$ be a coalition blocking~$\Gamma$.
Each player~$i \in C$ has at least $k-1$ friends in~$C$.
\end{claim}

\begin{claimproof} 
  First, if some player~$i$ has a non-dummy friend~$j$
  in~$C$ (i.e., a friend $j \in C \setminus A$), then $i$ must have at least~$k-1$ friends in~$C$, as otherwise,
for $\varphi \in \{\minEQ,\minAL\}$, we have
$\util_j^{\varphi}(\Gamma_{C \to \emptyset}) \leq \val_i(\Gamma_{C \to \emptyset}) \leq (k-2)n<\util_j^{\varphi}(\Gamma)$ which contradicts~$j \in C$.
  In particular, since dummy players in~$C$ must have non-dummy friends in~$C $ (as $C \subseteq A$ is not possible), and each dummy player has $k-1$ friends in total, we get that whenever $C \cap A_e \neq \emptyset$ for some $e =xy\in E$, then $C$ must contain all players in~$\{e',b_{x,e},b_{y,e} \} \cup A_e$. 

  To prove the claim, it remains to show that every non-dummy player~$i$ in~$C$ has a non-dummy friend in~$C$. First, every player in~$C$ must have some friend in~$C$, as otherwise
this player's utility is zero in~$\Gamma_{C \to \emptyset}$.
  For the sake of contradiction, assume that some non-dummy player~$i \in C$  has only dummy friends in~$C$.
  Then, by construction, $i$ must be among the players in $\{e',b_{x,e},b_{y,e} \}$ for some edge $e=xy \in E$ for which $C \cap A_e \neq \emptyset$. 
  Since $e'$ is friends with both~$b_{x,e}$ and~$b_{y,e}$, it follows that $i$ has a friend in~$\{e',b_{x,e},b_{y,e}\}$.
  However, by the argument given in the previous paragraph, $C \cap A_e \neq \emptyset$ implies $\{e',b_{x,e},b_{y,e} \} \subseteq C$, showing that $i$ has a non-dummy friend in~$C$.
  This contradiction proves the claim.
\end{claimproof}

Let $C \subseteq N$ be a coalition that blocks~$\Gamma$.
Using Claim~\ref{clm:circulant}, we obtain that  $C \subseteq V' \cup E' \cup B \cup A$.
Let us create the subgraph~$H_C$ of~$H$ that contains a given vertex or edge if and only if the corresponding vertex or edge player is contained in~$C$.
We
now show that $H_C$ is a well-defined subgraph of~$H$.
First, recall that each incidence player~$b_{v,e}$  has exactly $k-1$ friends in~$ V' \cup E' \cup B \cup A$, namely the players~$v'$, $e'$, and the~$k-3$ players in~$A_e$; therefore, 
if $b_{v,e}$ is contained in~$C$, then 
due to Claim~\ref{clm:min-degree} all friends of~$b_{v,e}$ must be in~$C$ as well. 
Keeping this in mind, assume that $e' \in C$ for some edge~$e=xy \in E$. Since $e'$ also has exactly $k-1$ friends in~$ V' \cup E' \cup B \cup A$, namely those in~$A_e$ together with the two incidence player~$b_{x,e}$ and~$b_{y,e}$,
all of them must be in~$C$ by Claim~\ref{clm:min-degree}.
However, $\{b_{x,e},b_{y,e}\} \subseteq C$ in turn implies $\{x',y'\} \subseteq C$, so $H_C$ is indeed well-defined.

Using again Claim~\ref{clm:min-degree}, we know that each vertex in~$H_C$ has degree at least~$k-1$.
To see this, note that $x' \in C \cap V'$ implies that there are at least $k-1$
incidence-player friends of~$x'$ in~$C$, together with their edge-player friends, which means that there are at least~$k-1$ edges incident to~$x$ in~$H_C$. 
To prove that $H_C$ is a clique of size~$k$, we will use  Observation~\ref{obs:clique}. 

By our previous arguments, $e' \in C \cap E'$ implies that all 
 $k-1$ friends of~$e'$ in~$B \cup A$ are also in~$C$.
Since no player in~$B \cup A$ has two edge players as a friend, we get 
\[
|C| = |C \cap V'|+|C \cap E'|+|C \cap (B\cup A)| \geq 
|C \cap V'|+k|C \cap E'|.
\]
Consider now an edge player~$e'$ in~$C$.
Since every friend of~$e'$ in~$C$, as well as~$e'$ itself, has exactly $k-1$ friends in~$C$, the utility of~$e'$ is $\util_{e'}^{\varphi}(\Gamma_{C \to \emptyset})=(k-1)n-(|C|-k)$ for $\varphi \in \{\minEQ,\minAL\}$, which exceeds~$\util_{e'}^{\varphi}(\Gamma)$ only if $|C|<|k'|$.
Thus $|C| \leq k \binom{k}{2}+k$, which implies $|C \cap V'|+k|C \cap E'|\leq k+k\binom{k}{2}$.
Applying Observation~\ref{obs:clique} with $\alpha=k$, it follows that $H_C$ is a clique of size~$k$.

\proofif
Let us now show that if $H$ contains a clique on a set~$K$ of $k$ vertices, then there exists a coalition $C \subseteq N$ that blocks~$\Gamma$.
Let $C$ contain the vertex and edge players corresponding to the vertices and edges of the clique~$K$, together with all friends of these edge players in~$A \cup B$, so $|C|=k+k\binom{k}{2}$.
The utility of each player in~$\Gamma_{C \to \emptyset}$ is then
$(k-1)n-(|C|-k)$, which is more than its utility in~$\Gamma$, because $|C|=k+k\binom{k}{2}<k'$. 
Therefore, $C$ blocks~$\Gamma$, and so $\Gamma$ is \emph{not} core-stable.
\end{proof}

\subsection{Average-Based EQ and AL AHGs}

To show the $\conp$-completeness of verifying core stability in an average-based EQ or AL \AHG, 
we introduce a new type of gadgets. 
For some integers $d \geq 1$ and $k'>2d+1$, we define a \emph{$(d,k')$-dome gadget} as follows: It contains $k'$ players, among them a \emph{top player}~$p^\star$ with friends $p_1,\dots,p_d$ that we call \emph{mid players}, and with the remaining $k'-d-1$ players forming a clique in the friendship graph that we call the \emph{base clique}.
Additionally, for each mid player~$p_i$, we select a player~$p'_i$ in the base clique in a way that the players $p'_1,\dots,p'_d$, called \emph{fringe players}, are all distinct, and we let $p_i$ be friends with~$p'_i$.

We further introduce a modification of this gadget: Let a \emph{pinched $(d,k')$-dome} be obtained from a $(d,k')$-dome gadget by the identification of all $d$ mid players into a single mid player.

See Figure~\ref{fig:dome} for an illustration of such gadgets. 
The following claim captures their key property. 

\begin{proposition}
\label{prop:dome}
Suppose that an instance of an average-based EQ or AL \AHG\ over~$n$ players
contains a (possibly pinched) $(d,k')$-dome gadget on player set~$P$ for some integers~$d$ and~$k'$ satisfying $n\geq k'(d+1)$ and $k'>2d+3$.
Assume further that no player in~$P$ except for its top player~$p^\star$ has a friend outside~$P$.
If $P \in \Gamma$ is a coalition in some coalition structure~$\Gamma$, and a coalition $C$ blocks~$\Gamma$, then $C$ contains no player from the base clique of the gadget. 
\end{proposition}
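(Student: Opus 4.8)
The plan is to assume for contradiction that $C$ blocks $\Gamma$ and contains at least one player of the base clique, and to derive a contradiction. First I would record the valuations inside the coalition $P\in\Gamma$: writing $b=k'-d-1$ for the size of the base clique, a non-fringe base-clique player has valuation $n(b-1)-(d+1)$, a fringe player has the slightly larger value $nb-d$ (it has one extra friend, its mid player), while the mid players and the top player $p^\star$ have valuations of order $n$ and $nd$, respectively; under both models the utility equals the friend-inclusive average (EQ) or the $w$-weighted friend average plus own value (AL). The structural fact I would then isolate is that \emph{no base-clique player has a friend outside $P$}: hence for any $C$ the friends of such a player lying in $C$ form a subset of its friends in $P$, so in passing from $P$ to $C$ it can only lose friends (never gain). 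Each lost friend lowers its $n\cdot|F_i\cap C|$ term by $n$, which cannot be offset by removing enemies (there are at most $d+1<n$ of them in $P$, and outside players are only further enemies); because $n\ge k'(d+1)$, losing even one friend strictly decreases the valuation, and this single inequality drives the whole argument.

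Next I would analyse the non-fringe players, the cleanest case since all their friends are base-clique players. Let $S=C\cap(\text{base clique})$ with $s=|S|$, and let $t$ denote the number of mid players contained in $C$. For a non-fringe $j\in S$ one checks $\{j\}\cup(F_j\cap C)=S$, so $\friendiaverage{j}{\Gamma_{C\to\emptyset}}$ is exactly the average of $\val_x(\Gamma_{C\to\emptyset})$ over $x\in S$, while $\friendiaverage{j}{\Gamma}$ is the average of $\val_x(P)$ over the \emph{whole} base clique. Writing both averages explicitly as functions of $s$, $t$, and $|C|$, I would show that $\util_j^{\avgEQ}(\Gamma_{C\to\emptyset})>\util_j^{\avgEQ}(\Gamma)$ can hold only when $s=b$, $t=d$, and $|C|=b+d$, i.e.\ when $C$ is \emph{exactly} the base clique together with all $d$ mid players. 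The two bounds enter with opposite signs: $n\ge k'(d+1)$ makes any lost friend ($s<b$) or any missing mid ($t<d$) fatal, whereas $k'>2d+3$ guarantees there genuinely are non-fringe players and controls the lower-order terms so that the displayed configuration is the unique survivor. I would also note that $p^\star$ is an enemy of every base-clique player, so putting it into $C$ merely enlarges $|C|$ and adds enemies, which by the same computation breaks the strict inequality; hence $p^\star\notin C$.

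It then remains to kill the single surviving coalition $C=(\text{base clique})\cup\{\text{mids}\}$, and here the top player is decisive: in this $C$ each mid has lost $p^\star$—its only neighbour outside the fringe—so its valuation drops by about $n$; consequently, for any fringe player $p'_i\in C$ the average defining $\util_{p'_i}^{\avgEQ}(\Gamma_{C\to\emptyset})$ now carries this depressed mid-value and falls strictly below $\util_{p'_i}^{\avgEQ}(\Gamma)$, where the mid still enjoyed $p^\star$ as a friend. Thus no fringe player improves, contradicting $p'_i\in C$ (note $s=b$ forces all fringe players into $C$). To finish the case distinction I would separately treat the possibility that $S$ consists of fringe players only: since $b>d$, each such player has lost all of its non-fringe base-clique friends, so its valuation and hence its EQ-average crash far below the $\Gamma$-level and it cannot improve. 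The AL case runs through the same three steps using the friend-\emph{exclusive}, $w$-weighted average, with $w\ge n^4$ making the comparison lexicographic (decided first by the friend average, then by the own valuation), and the pinched variant is identical except that the single mid player plays the role of all $d$ mids at once. I expect the main obstacle to be the second step: pinning the blocking coalition down to exactly base-clique-plus-mids requires tracking the $O(n/b)$ and $O(1/b)$ terms in both averages carefully and invoking both $n\ge k'(d+1)$ and $k'>2d+3$ with the correct signs.
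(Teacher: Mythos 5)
Your proposal is correct and takes essentially the same route as the paper: both proofs rest on the observation that base-clique players can only lose friends in moving from $P$ to $C$, that a single lost friend costs about $n/|K|$ in the relevant average and so cannot be offset by shedding at most $d+1$ enemies (this is where $n\geq k'(d+1)$ enters), and both then eliminate non-fringe base players first and fringe players second. The only difference is the endgame, and it is a mirror image rather than a new idea: the paper forces $p^\star$ \emph{into} $C$ (otherwise a fringe player's mid friend loses a friend) and derives the contradiction from the base-clique players, who cannot strictly improve once $P\subseteq C$, whereas you force $p^\star$ \emph{out of} $C$ (non-fringe players need $|C|<k'$ to strictly improve) and derive the contradiction from the fringe players, whose mid friends have lost $p^\star$.
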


\begin{proof}
Let $k''=k'-d-1$ denote the size of the base clique~$K$ in our gadget over~$P$.
First observe that if some non-fringe player~$i$ within the base clique~$K$ is in~$C$, then all~$k''$ players in the base clique are in~$C$, as otherwise each friend of~$i$ (and $i$ itself) loses a friend in~$C$ when compared to~$P$, and thus the average number of friends among $i$'s friends decreases by at least~$1$.
The average number of enemies may only decrease by at most~$d+1$, so $n>d+1$ implies $\util_i^\varphi(\Gamma_{C \to \emptyset})<\util_i^\varphi(\Gamma)$, a contradiction to~$i \in C$.
In fact, $i \in C$ further implies that each player in the base clique~$K$ must have the same number of friends in~$C$ as in~$\Gamma$, as otherwise the average number of friends among $i$'s friends (possibly counting also $i$ itself) decreases by at least~$\frac{1}{k''}$, while the average number of enemies of these players can only decrease by at most~$d+1$, so $\frac{n}{k''}>d+1$ implies $\util_i^\varphi(\Gamma_{C \to \emptyset})<\util_i^\varphi(\Gamma)$. 
Thus we have $P \setminus \{p^\star\} \subseteq C$. 
The presence of some fringe player~$j$ in~$C$ further implies $p^\star \in C$ by the same arguments, since $p^\star \notin C$ would mean a decrease of at least~$\frac{1}{k''+1}$ in the utility of~$j$. 
Therefore, we get $P \subseteq C$; however, then the utility of every player in the base clique is at most its utility in~$\Gamma$, which contradicts our assumption that $C$ blocks~$\Gamma$. 

We thus have proven that $C$ cannot contain non-fringe players in the base clique of~$P$.
However, from this it follows that no fringe player~$i$ of~$P$ can be contained in~$C$, as its utility in~$\Gamma_{C \to \emptyset}$ would be less than its utility in~$\Gamma$: Observe that all fringe-player friends of~$i$ (if any) lose at least one friend when switching from~$\Gamma$ to~$\Gamma_{C \to \emptyset}$ (because there are $k'-2d-1>0$ non-fringe players in the base clique~$K$), while the mid-player friend of~$i$ has at most~$d+1$ friends, which is strictly fewer than the $k''-1 \geq d+2$ friends of the base clique players in~$\Gamma$. 
Thus, no player in~$K$ can be contained in~$C$.
\end{proof}

We are now ready to show $\conp$-completeness of verifying core stability in an \AHG\ with either \avgEQ\ or \avgAL\ preferences, starting with the former.

\begin{theorem}
\label{thm:avgEQ-corestable-coNPc}
Verifying core stability in an average-based EQ \AHG\ is $\conp$-complete.
\end{theorem}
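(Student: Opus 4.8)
The plan is to follow the template established in the proof of Theorem~\ref{thm:minEQAL-corestable-coNPc}, replacing the $(k-1,k')$-circulant gadgets by the (possibly pinched) $(d,k')$-dome gadgets whose decisive behavior is recorded in Proposition~\ref{prop:dome}. Membership in $\conp$ is immediate: given a coalition~$C$, one checks in polynomial time whether $C$ blocks~$\Gamma$, so the complement of the verification problem lies in~$\np$. For hardness I would again reduce from \myproblem{Clique}, building from an instance~$(H,k)$ an average-based EQ \AHG~$I$ together with a coalition structure~$\Gamma$ such that $H$ has a clique of size~$k$ if and only if $\Gamma$ is \emph{not} core-stable. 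Note that, unlike for the circulant gadgets, the dome parameters~$d$ and~$k'$ are free integers, so no parity assumption on~$k$ is needed.

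For the construction I would attach a dome gadget to every vertex, edge, and incidence object of~$H$, letting the top player of each gadget be the corresponding vertex player~$v'$, edge player~$e'$, or incidence player~$b_{v,e}$; by design these top players are the only players of a gadget having friends outside it. As in the circulant construction, the inter-gadget friendships would encode~$H$ subdivided once (so $b_{v,e}$ is friends with~$v'$ and~$e'$), and I would add, for each edge~$e=xy$, a clique of dummy players adjacent to~$e'$, $b_{x,e}$, and~$b_{y,e}$, tuning the number of dummies so that each of $v'$, $e'$, and~$b_{v,e}$ has exactly~$k-1$ potential ``real'' friends among the distinguished and dummy players. The coalition structure~$\Gamma$ is the partition of~$N$ into the player sets of the individual gadgets together with the dummy sets, and the dome parameters are chosen to meet the hypotheses $n\geq k'(d+1)$ and $k'>2d+3$ of Proposition~\ref{prop:dome}.

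Correctness would then proceed in two directions. For the ``only if'' direction, given a blocking coalition~$C$, Proposition~\ref{prop:dome} guarantees that $C$ meets no base clique of any dome. Since the dome proposition (unlike Claim~\ref{clm:circulant}) does not by itself exclude mid players, I would add a short argument ruling them out: a mid player placed in~$C$ has its fringe friend barred from~$C$, leaving only its own top player as a friend, which drops its valuation below the value it enjoys in~$\Gamma$. This forces $C\subseteq V'\cup E'\cup B\cup A$, and from here the combinatorics are those of Theorem~\ref{thm:minEQAL-corestable-coNPc}: each distinguished player in~$C$ must collect at least~$k-1$ friends to lift its average valuation above its value in~$\Gamma$, the induced subgraph~$H_C$ on the selected vertex and edge players has minimum degree at least~$k-1$, and once every distinguished player holds exactly~$k-1$ friends its average valuation equals~$(k-1)n$ minus an enemy term that grows with~$|C|$, which caps~$|C|$; Observation~\ref{obs:clique} with $\alpha=k$ then yields that $H_C$ is a clique of size~$k$. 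For the ``if'' direction, a $k$-clique of~$H$ yields the coalition consisting of the corresponding vertex and edge players together with their incidence and dummy friends, which I would verify to block~$\Gamma$.

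The main obstacle is the averaging itself. Whereas the min-based utility makes a single low-degree friend decisive, here $\util_i^{\avgEQ}(C)=\friendiaverage{i}{C}$ blends the valuations of~$i$ and all its friends in~$C$, each of whose valuations carries a coefficient of~$n$ on its own friend count; I must therefore argue that the $\Gamma$-baseline average cannot be matched unless \emph{every} relevant player simultaneously reaches the required degree, while the lower-order enemy terms still enforce the size bound with the necessary precision. Calibrating~$d$, $k'$, the dummy count, and the external degrees so that these two thresholds coincide exactly---and so that fringe and mid players are genuinely excluded from any profitable~$C$---is the delicate part of the reduction.
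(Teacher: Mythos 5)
Your proposal contains a genuine gap, and it is precisely the obstacle you name in your final paragraph but do not resolve: the dummy-player design imported from Theorem~\ref{thm:minEQAL-corestable-coNPc} breaks under averaging. In the min-based proof, Claim~\ref{clm:min-degree} (every member of a blocking coalition has at least $k-1$ friends in it) works because a single low-valuation friend drags the \emph{minimum} down; this has no average-based analogue, and your assertion that each distinguished player ``must collect at least $k-1$ friends to lift its average valuation above its value in~$\Gamma$'' fails in your construction. Concretely, take any single edge $e=xy$ of~$H$ and let $C=\{e',b_{x,e},b_{y,e}\}\cup A_e$, where $A_e$ is your dummy clique of size~$k-3$. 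Then $|C|=k$; the player~$e'$ and all dummies have $k-1$ friends and no enemies in~$C$ (valuation $(k-1)n$), while the two incidence players have $k-2$ friends and one enemy (valuation $(k-2)n-1$). Under $\avgEQ$, the utilities in~$C$ are $\frac{(k-2)(k+1)n-2}{k}$ for $e'$ and the dummies and $\frac{k(k-2)n-1}{k-1}$ for the incidence players; for $k\geq 5$ all of these exceed $3n$, and the first also exceeds~$(k-4)n$. But the utility in~$\Gamma$ of the top player of \emph{any} $(d,k')$-dome is $\frac{3d}{d+1}n$ minus lower-order terms, hence below~$3n$ no matter how you tune~$d$, and the dummies' utility in~$\Gamma$ is $(k-4)n$. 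So this~$C$ blocks~$\Gamma$ for every edge of~$H$, regardless of whether $H$ has a $k$-clique, and the reduction is wrong. The same mismatch destroys your size cap: dome baselines are below~$3n$ while dummy-enriched coalitions reach utilities near~$(k-1)n$, so the lower-order enemy term never becomes decisive, $|C|$ is unbounded, and Observation~\ref{obs:clique} cannot be invoked.

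The paper's proof avoids dummies entirely and instead makes the averaging itself enforce both thresholds through asymmetric dome parameters: vertex gadgets are $(k-1,k')$-domes, giving $v'$ the baseline $\frac{3k-3}{k}\cdot n$ minus lower-order terms, while edge and incidence gadgets are $(2,k')$-domes, giving $e'$ and $b_{v,e}$ baselines $2n-(k'-3)$. Since an edge player's only possible friends in a blocking coalition are its two incidence players, each with at most two friends there, its utility is pinned at $2n-(|C|-3)$ at best; this yields Claim~\ref{clm:dome-min-degree}, forcing $\{b_{x,e},b_{y,e},x',y'\}\subseteq C$ and the crucial cap $|C|<k'$. The vertex-degree bound then comes from the averaging asymptotics $\frac{3m}{m+1}\geq\frac{3(k-1)}{k}$ if and only if $m\geq k-1$ (friends with only one friend in~$C$ can only drag the average down), and Observation~\ref{obs:clique} is applied with $\alpha=3$, not $\alpha=k$. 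Finally, your mid-player exclusion argument is also too quick: under $\avgEQ$ a mid player's utility in~$C$ is an average that includes the top player's possibly large valuation, so noting that the mid player's own valuation drops proves nothing. The paper excludes only mid players of edge and incidence gadgets (whose top players have at most four friends in total, capping the average at $\frac{5}{2}n$), and deliberately does \emph{not} exclude vertex-gadget mid players, absorbing them instead into the $\ell$-term of the degree computation.
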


\begin{proof}
  Membership of the verification problem in $\conp$ again is obvious.
  To show its $\conp$-hardness, we again reduce from \myproblem{Clique}.
  Let $(H,k)$ be our input, where $H=(V,E)$ is a graph and~$k$ an integer.
  We may assume, without loss of generality, that $k \geq 4$ and that $|V|+|E|\geq k^2$.
  Let $k'=k+3\binom{k}{2}+1$.
  The reduction will, to some degree, be similar to the one given in the proof of Theorem~\ref{thm:minEQAL-corestable-coNPc}, but we need to carefully modify the construction and proof of correctness to make it work for  \avgEQ\ preferences.

\smallskip
\noindent
{\bf Construction.}
We construct vertex, edge, and incidence gadgets as follows.
For each vertex $v \in V$, we add a $(k-1,k')$-dome gadget
over player set~$P_v$ whose top player is the \emph{vertex player}~$v'$ corresponding to~$v$.
For each edge $e \in E$, we add a $(2,k')$-dome gadget 
over player set~$P_e$ whose top player is the \emph{edge player}~$e'$ corresponding to~$e$.
Further, for each $v \in V$ and~$e \in \delta(v)$, we introduce a $(2,k')$-dome gadget
over player set~$P_{v,e}$ whose top player is the \emph{incidence player}~$b_{v,e}$.
Let $n$ denote the total number of players in the constructed gadgets, and 
we will also use the notation
$V' = \{v' \mid v \in V\}$, $E' = \{e' \mid e \in E\}$, and $B = \{b_{v,e} \mid v \in V, e \in \delta(v)\}$. 
Besides the friendships within gadgets, we let each incidence player~$b_{v,e} \in B$ be friends with the vertex player~$v'$ and the edge player~$e'$, so the friendship graph induced by $V' \cup E' \cup B$ can be obtained from~$H$ by subdividing each of its edges once.
This completes the definition of our average-based EQ \AHG.

Let $\Gamma$ be the coalition structure that contains
the player set of each vertex, edge, and incidence gadget as a coalition;
formally,
\[
\Gamma = \{P_v \mid v \in V\} \cup \{P_e \mid e \in E\}  \cup \{P_{v,e} \mid v \in V,e \in \delta(v)\}.
\] 

Due to Proposition~\ref{prop:dome}, we will not be interested in non-fringe base players, so we only need to compute the valuation of the remaining players in~$\Gamma$, which are as follows:
\[
\val_i(\Gamma)=\left\{
\begin{array}{ll}
(k-1)\cdot n - (k'-k) & \text{ if $i$ is a vertex player,} \\
2\cdot n - (k'-3) & \text{ if $i$ is an edge, an incidence, or a mid player,} \\
(k'-k) \cdot n - (k-1) & \text{ if $i$ is a fringe player in a vertex gadget,} \\
(k'-3) \cdot n - 2 & \text{ if $i$ is a fringe player in an edge or incidence gadget.} 
\end{array}
\right.
\]
It is now straightforward to compute the utilities of the players in~$\Gamma$ in the  average-based EQ model; we provide these for top and mid players below:
\[
\util_i^\avgEQ(\Gamma)=\left\{\!\!
\begin{array}{l@{\hspace{5pt}}l}
\frac{3k-3}{k}\cdot n - \left(k'-1-\frac{3k-3}{k}\right) & \text{if $i \in V'$,} \\[2pt]
2 \cdot n - (k'-3) & \text{if $i \in E' \cup B$,} \\[2pt]
\frac{k'+1}{3} \cdot n - (k'-1-\frac{k'+1}{3}) & \text{if $i$ is a mid player in $\Gamma(i)$.}
\end{array}
\right.
\]

\smallskip
\noindent
{\bf Proof of correctness.}
We claim that $\Gamma$ is \emph{not} core-stable if and only if $H$ contains a clique of size~$k$.

\proofonlyif
Let us first assume that $\Gamma$ is \emph{not} core-stable; we show that $H$ contains a clique of size~$k$. 
Let $C$ be a coalition blocking~$\Gamma$.
Observe that $C$ can contain no mid players from an edge or incidence gadget: By Proposition~\ref{prop:dome}, a mid player~$i$ can have only one friend in~$C$, namely the top player in~$\Gamma(i)$, while the top player in such a gadget has four friends in total; hence, the utility of~$i$ in~$C$ can be at most~$\frac{5}{2}n < \util_i^\avgEQ(\Gamma)$, a contradiction to~$i \in C$. 
Therefore, by Proposition~\ref{prop:dome}, $C$ can only contain players in~$V' \cup E' \cup B$, and possibly some mid players from vertex gadgets.

Next, we show an analogue of Claim~\ref{clm:min-degree}.

\begin{claim} 
\label{clm:dome-min-degree}
If $e' \in C$ for some edge~$e=xy \in E$, then $\{b_{x,e},b_{y,e},x',y' \}\subseteq C$ and  $|C|<k'$.
\end{claim}

\begin{claimproof} 
  The friends of~$e'$ in~$C$ can only be the two incidence players $b_{x,e}$ and~$b_{y,e}$, and they both can have at most two friends in~$C$ (besides~$e'$,
  one of the vertex players~$x'$ and~$y'$).
  Thus the utility of~$e'$ in $\Gamma_{C \to \emptyset}$ can exceed~$\util_{e'}^\avgEQ(\Gamma)$ only if $\{b_{x,e},b_{y,e}\} \subseteq C$ and, moreover, both $b_{x,e}$ and~$b_{y,e}$ have two friends in~$C$, which leads to $\{x',y'\} \subseteq C$.
Therefore, the utility of~$e'$ in $\Gamma_{C \to \emptyset}$ is exactly $\util_{e'}^\avgEQ(\Gamma_{C \to \emptyset})=2n-(|C|-3)$, which exceeds $\util_{e'}^\avgEQ(\Gamma)$ if and only if $|C|<k'$.
 \end{claimproof}

Let us create a subgraph~$H_C$ of~$H$ that contains a given vertex or edge if and only if the corresponding vertex or edge player is contained in~$C$. 
By Claim~\ref{clm:dome-min-degree}, $H_C$ is a well-defined subgraph of~$H$.

We proceed by showing that every vertex~$v$ in~$H_C$ has at least~$k-1$ incident edges.
Intuitively, the reason for this is that adding friends with only one friend in the coalition cannot raise the utility of a player.
For a formal proof, assume for the sake of contradiction that $v$ has at most~$k-2$ incident edges.
Then $v'$ has at most $k-2$ friends with two friends being in~$C$, and $v'$ may have an additional number~$\ell$ of friends, each of whom has only~$v'$ as a friend in~$C$.
Then we get
\begin{align*}
\util_{v'}^\avgEQ(\Gamma_{C \to \emptyset}) &\leq 
\frac{2(k-2)+(k-2+\ell)+\ell}{k+\ell-1} \cdot n 
= \left( 3-\frac{\ell+3}{k+\ell-1} \right) \cdot n \\
&  < \left(\frac{3k-3}{k}-\frac{1}{k^2}\right) \cdot n<
\frac{3k-3}{k}\cdot n - \left(k'-1-\frac{3k-3}{k}\right)= \util_{v'}^\avgEQ(\Gamma),
\end{align*}
where the first strict inequality holds because $k\geq 4$ and $\ell\geq 0$, as can be checked through simple
calculation, and the second strict inequality follows from our assumptions on the size of the graph~$H$ that guarantees $n \geq k^2\cdot k'$.
Therefore, we obtain a contradiction to~$v' \in C$, proving that each vertex in~$H_C$ has degree at least~$k-1$.

To show that $C$ is a clique, we use Observation~\ref{obs:clique}.
Recall that $|C|< k'$ due to Claim~\ref{clm:dome-min-degree}. Since each incidence player has only one edge-player friend, but each edge player in~$C$ has two incidence-player friends in~$C$, we get 
\begin{equation}
\label{eqn:bound-on-blockingcoal}
k+3\binom{k}{2}=k'-1 \geq |C| = |C \cap V|+|C \cap E|+|C \cap B| \geq |C \cap V|+3|C \cap E|.
\end{equation}
Applying Observation~\ref{obs:clique} with $\alpha=3$, it follows that $H_C$ is a clique of size~$k$.

\proofif
Let us now show that if $H$ contains a clique on a set~$K$ of $k$ vertices, then there exists a coalition~$C$ that blocks~$\Gamma$.
Similarly as in the proof of Theorem~\ref{thm:minEQAL-corestable-coNPc}, let $C$ contain the vertex and edge players corresponding to the vertices and edges of the clique~$K$, together with all friends of these edge players in~$B$, so $|C|=k+3\binom{k}{2}$.
In the new coalition structure~$\Gamma_{C \to \emptyset}$ (after the members of $C$ have deviated from their coalitions in~$\Gamma$), the utility of each player~$i$ in~$C$ is as follows:
\[
\util_i^\avgEQ(\Gamma_{C \to \emptyset})=\left\{\!\!
\begin{array}{ll}
\frac{3k-3}{k}\cdot n - \left(|C|-1-\frac{3k-3}{k}\right) & \text{if $i \in V'$,} \\[2pt]
2 \cdot n - (|C|-3) & \text{if $i \in E'$,} \\[2pt]
\frac{k+3}{3} \cdot n - (|C|-1-\frac{k+3}{3}) & \text{if $i \in B$.}
\end{array}
\right.
\]

Since $|C|=k+3\binom{k}{2}<k'$ and $k\geq 4$, it follows that $\util_i^\avgEQ(\Gamma_{C \to \emptyset}) < \util_i^\avgEQ(\Gamma)$ for each player~$i$ in~$C$. 
Therefore, $C$ blocks~$\Gamma$, and so $\Gamma$ is \emph{not} core-stable.
\end{proof}

Finally, we turn to showing that verifying core stability in an average-based AL \AHG\ is $\conp$-complete. 
The proof of Theorem~\ref{thm:avgAL-corestable-coNPc}
is similar to the proof Theorem~\ref{thm:avgEQ-corestable-coNPc} and relies on pinched dome-gadgets.

\begin{theorem}
\label{thm:avgAL-corestable-coNPc}
Verifying core stability in an average-based AL \AHG\ is $\conp$-complete.
\end{theorem}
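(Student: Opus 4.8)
The plan is to mirror the structure of the proof of Theorem~\ref{thm:avgEQ-corestable-coNPc}, replacing ordinary dome gadgets by \emph{pinched} dome gadgets wherever the altruistic-treatment aggregation makes this advantageous, and recomputing the relevant valuations and utilities under \avgAL\ preferences. Membership in $\conp$ is immediate, since a blocking coalition serves as a polynomial-size certificate that $\Gamma$ is not core-stable. For $\conp$-hardness I again reduce from \myproblem{Clique} on input $(H,k)$ with $H=(V,E)$, assuming $k \geq 4$ and $|V|+|E| \geq k^2$ as before, and fix $k'=k+3\binom{k}{2}+1$. The key difference from the EQ case is that under \avgAL\ preferences a player's utility is $\val_i(C) + w\cdot\friendaverage{i}{C}$, so the player's \emph{own} valuation enters with the large weight-free term while the friends' average is amplified by $w \geq n^4$. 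This means the friends' average number of friends again dominates the comparison, but the constant term $\val_i(C)$ must be tracked to break ties; the pinched dome, in which all $d$ mid players are identified into a single mid player, is used precisely to control how the top player's own valuation and its friends' averages interact.

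First I would set up the construction: for each vertex $v\in V$ a (possibly pinched) $(k-1,k')$-dome gadget on $P_v$ with vertex player~$v'$ as top player, for each edge $e\in E$ a $(2,k')$-dome gadget on $P_e$ with edge player~$e'$, and for each incidence pair $(v,e)$ with $e\in\delta(v)$ a $(2,k')$-dome gadget on $P_{v,e}$ with incidence player~$b_{v,e}$. Outside the gadgets I again make each $b_{v,e}$ a friend of both $v'$ and $e'$, so that $G[V'\cup E'\cup B]$ is the one-subdivision of~$H$. The coalition structure $\Gamma$ is the one whose coalitions are exactly the gadget player sets $P_v$, $P_e$, $P_{v,e}$. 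I would then verify the hypotheses $n \geq k'(d+1)$ and $k'>2d+3$ of Proposition~\ref{prop:dome} for each gadget, which holds for the chosen $k'$ and the assumed lower bound on $|V|+|E|$, so that Proposition~\ref{prop:dome} applies verbatim: any blocking coalition $C$ contains no base-clique player from any gadget, and hence $C\subseteq V'\cup E'\cup B$ together with possibly some mid players.

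Next I would carry out the correctness argument. The ``only if'' direction follows the EQ template: using Proposition~\ref{prop:dome} to exclude base and fringe players, then arguing that mid players of edge and incidence gadgets cannot profitably join $C$ because their single remaining friend (the top player) has too few friends to raise the \avgAL\ utility above its value in~$\Gamma$. I then prove the \avgAL-analogue of Claim~\ref{clm:dome-min-degree}, namely that $e'\in C$ forces $\{b_{x,e},b_{y,e},x',y'\}\subseteq C$ and $|C|<k'$, by computing $\util_{e'}^\avgAL(\Gamma_{C\to\emptyset})$ and comparing with $\util_{e'}^\avgAL(\Gamma)$. From this I build the subgraph $H_C$ and show every vertex of $H_C$ has degree at least $k-1$ via the same ``friends with only one friend in $C$ cannot help'' estimate, now written for the \avgAL\ utility expression; combined with the counting inequality $|C\cap V|+3|C\cap E|\leq k+3\binom{k}{2}=k'-1$ and Observation~\ref{obs:clique} with $\alpha=3$, this yields that $H_C$ is a clique of size~$k$. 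The ``if'' direction is the converse construction: given a $k$-clique $K$, let $C$ consist of the corresponding vertex and edge players together with the incidence players in~$B$ adjacent to those edges, so $|C|=k+3\binom{k}{2}<k'$, and check that each player in $C$ strictly improves its \avgAL\ utility.

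I expect the main obstacle to be the recomputation of the player utilities under \avgAL\ preferences and the verification that the pinched mid player behaves correctly. Because the \avgAL\ utility couples the unweighted own-valuation term $\val_i(C)$ with the $w$-weighted friends' average, the inequalities establishing the degree bound and the mid-player exclusion are more delicate than in the EQ case, where the own valuation played no role; in particular one must confirm that the large weight $w\geq n^4$ makes the friends'-average comparison decisive while the $O(n)$-sized own-valuation term never overturns it, and that the pinching of mid players does not inadvertently let a mid player accumulate enough friends (through the identified vertex) to block. Getting these numeric thresholds to line up for the chosen $k'$, using $k\geq 4$ and $n\geq k^2 k'$, is the crux; once they are in place, the structural steps transfer directly from Theorem~\ref{thm:avgEQ-corestable-coNPc}.
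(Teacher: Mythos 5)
There is a genuine gap, and it sits exactly at the step you planned to ``transfer directly'': the proof that every vertex of~$H_C$ has degree at least~$k-1$. Your construction is the EQ construction (subdivision connection pattern, $(k-1,k')$-dome vertex gadgets, $(2,k')$-dome edge and incidence gadgets), but the EQ degree-bound argument is driven by the vertex player's \emph{own} valuation, which under EQ enters the average $\friendiaverage{v'}{C}$. Under \avgAL\ preferences, $\val_{v'}(C)$ is only an unweighted tiebreaker, while the $w$-weighted term $\friendaverage{v'}{C}$ sees only the valuations of $v'$'s incidence-player friends---and in the subdivision graph each incidence player in a blocking coalition has exactly two friends ($v'$ and~$e'$) \emph{regardless of how many edges at $v$ lie in~$H_C$}. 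Concretely, take $k\geq 4$ and let $H$ contain a triangle $\{u,v,w\}$ but no $k$-clique; let $C$ be the $12$ players consisting of $u',v',w'$, the three edge players, and the six incidence players of the triangle. Every incidence and edge player in~$C$ has friends' average $2n-9$, versus $2n-(k'-3)$ in~$\Gamma$, and every vertex player also has friends' average $2n-9$ in~$C$ versus $2n-(k'-3)$ in~$\Gamma$ (its mid-player friends in~$\Gamma$ have two friends each as well); since $k'\geq 23>12$, all these averages strictly improve. The drop in $v'$'s own valuation from roughly $(k-1)n$ to roughly $2n$ is overwhelmed by the weight $w\geq n^4$ on the average, so $C$ blocks~$\Gamma$ even though $H$ has no $k$-clique, and your reduction is unsound. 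The ``possibly pinched'' hedge on the vertex gadget does not rescue this: pinching a $(k-1,k')$-dome makes $\friendaverage{v'}{\Gamma}\approx kn$, and then in the ``if'' direction the intended clique coalition (where $v'$'s friends have valuation only about $2n$) no longer blocks, so that variant fails in the other direction.

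This is why the paper's proof changes the construction rather than just the constants: vertex \emph{and} edge gadgets become pinched $(2,k')$-domes, incidence gadgets become pinched $\left(\frac{k+1}{3},k'\right)$-domes, and---crucially---the two incidence players $b_{x,e}$ and $b_{y,e}$ of each edge $e=xy$ are made friends with each other, so the graph on $V'\cup E'\cup B$ is no longer the one-subdivision of~$H$. The minimum-degree constraint is then enforced through the \emph{incidence} players instead of the vertex players: $x'$ is a friend of~$b_{x,e}$, so $x'$'s valuation---which scales with $\deg_{H_C}(x)$---enters $b_{x,e}$'s friends' average, and the pinched $\left(\frac{k+1}{3},k'\right)$-dome calibrates $\util_{b_{x,e}}^\avgAL(\Gamma)$ to about $\frac{k+4}{3}n$, so that $b_{x,e}$ can only profit from joining~$C$ if $x'$ has at least $k-1$ friends in~$C$. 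Without this rerouting of the degree constraint from vertex players to incidence players (together with the extra $b_{x,e}b_{y,e}$ friendships that make ``three friends each'' the relevant threshold), no recomputation of thresholds on your connection pattern can make the argument go through.
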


\begin{proof}
  Again, membership of the verification problem in $\conp$ is obvious and, to show its $\conp$-hardness, we present a reduction from \myproblem{Clique}.
  The proof will be quite similar to that of
  Theorem~\ref{thm:avgEQ-corestable-coNPc}: We use the same ideas and an analogous construction, however, the details need to be carefully adjusted.
  Let $(H,k)$ be our input instance of \myproblem{Clique}, where $H=(V,E)$ is a graph and~$k$ an integer.
  Set $k'=k+3\binom{k}{2}+1$.
  Without loss of generality, we assume that $|V|+|E| > k \geq 3$ and that $\frac{k+1}{3}$ is an integer.

\smallskip
\noindent
{\bf Construction.}
We construct vertex, edge, and incidence gadgets as follows.
For each vertex $v \in V$, we introduce a pinched $(2,k')$-dome gadget over player set~$P_v$ whose top player is the \emph{vertex player} $v'$ corresponding to~$v$.
For each edge $e \in E$, we introduce a pinched $(2,k')$-dome gadget over player set~$P_e$ whose top player is the \emph{edge player} $e'$ corresponding to~$e$.
Additionally, for each vertex $v \in V$ and each edge $e \in \delta(v)$, we introduce a pinched $\left(\frac{k+1}{3},k'\right)$-dome gadget
over player set~$P_{v,e}$ whose top player is the \emph{incidence player}~$b_{v,e}$.
We let $n$ denote the total number of players in the constructed gadgets, and we will use the notation $V' = \{v' \mid v \in V\}$, $E' = \{e' \mid e \in E\}$, and $B = \{b_{v,e} \mid v \in V, e \in \delta(v)\}$. 
Besides the friendships within gadgets, we let each incidence player~$b_{x,e} \in B$ for some edge~$e=xy \in E$ be friends with the vertex player~$x'$, the edge player~$e'$, and---in addition---the incidence player~$b_{y,e}$. 
This completes the definition of our \AHG.

Let $\Gamma$ be the coalition structure that contains
the player set of each vertex, edge, and incidence gadget as a coalition;
formally,
\[
\Gamma = \{P_v \mid v \in V\} \cup \{P_e \mid e \in E\}  \cup \{P_{v,e} \mid v \in V,e \in \delta(v)\}.
\] 

Due to Proposition~\ref{prop:dome}, we will only be interested in the utilities of the top and mid players within each gadget in~$\Gamma$.
Recall that the mid player in a pinched $(d,k')$-gadget has $d+1$ friends in the gadget.
Simple calculation and $k \geq 3$ gives us the following:
\begin{align*}
\util_i^\avgAL(\Gamma) & =\left\{ 
\begin{array}{ll}
3 \cdot n - (k'-4) & \text{ if } i \in V' \cup E', \\[2pt]
\frac{k+4}{3} \cdot n - \left(k'-1-\frac{k+4}{3}\right) \phantom{kki} & \text{ if }i \in B'; 
\end{array} \right.
\\
\util_i^\avgAL(\Gamma) & \geq 
\begin{array}{ll}
\phantom{m}
\frac{2k'-3}{3} \cdot n - \left(k'-1-\frac{2k'-3}{3}\right) & \text{ if }i \text{ is a mid player in~$\Gamma(i)$}.
\end{array}
\end{align*}

\smallskip
\noindent
{\bf Proof of correctness.}
We claim that $\Gamma$ is \emph{not} core-stable if and only if $H$ contains a clique of size~$k$.

\proofonlyif
Assume first that some coalition $C$ blocks~$\Gamma$.
Observe that a blocking coalition $C$ can contain no mid players from an edge or incidence gadget: By Proposition~\ref{prop:dome}, a mid player~$i$ can have only one friend in~$C$, namely the top player in~$\Gamma(i)$, while the top player in such a gadget has at most four friends in total; hence, the utility of~$i$ in~$C$ can be at most~$4n < \frac{2k'-4}{3} \cdot n < \util_i^\avgAL(\Gamma)$, a contradiction to~$i \in C$. 
Therefore, by Proposition~\ref{prop:dome}, $C$ can only contain players in~$V' \cup E' \cup B$, and possibly some mid players from vertex gadgets.

We define the graph~$H_C$ as in the proof of Theorem~\ref{thm:avgEQ-corestable-coNPc}, i.e., $H_C$ contains a given vertex or edge of~$H$ if and only if the corresponding
vertex or edge player is in~$C$.

Next, we show that Claim~\ref{clm:dome-min-degree} remains true for the modified construction.

\begin{claim} 
\label{clm:pinched-dome}
If $e' \in C$ for some edge~$e=xy \in E$, then $\{b_{x,e},b_{y,e},x',y' \}\subseteq C$ and  $|C|<k'$.
\end{claim}

\begin{claimproof} 
  The friends of~$e'$ in~$C$ can only be the two incidence players $b_{x,e}$ and~$b_{y,e}$, and both of these players can have at most three friends in~$C$.
  If one of $b_{x,e}$ and~$b_{y,e}$ is not in~$C$, then the utility of~$e'$ in~$\Gamma_{C \to \emptyset}$ could be at most~$2n < \util_{e'}^\avgAL(\Gamma)$.
  Hence, we get $\{b_{x,e},b_{y,e}\} \subseteq C$; moreover, both $b_{x,e}$ and~$b_{y,e}$ must have three friends in~$C$, which leads to $\{x',y' \} \subseteq C$.
Therefore, the utility of~$e'$ in $\Gamma_{C \to \emptyset}$ is exactly $\util_{e'}^\avgAL(\Gamma_{C \to \emptyset}) = 3n-(|C|-4)$, which exceeds $\util_{e'}^\avgAL(\Gamma)$ if and only if $|C|<k'$.
\end{claimproof}

By Claim~\ref{clm:pinched-dome}, we know that $H_C$ is a well-defined subgraph of~$H$. We next prove that every vertex  in~$H_C$ has degree at least~$k-1$. 

Assume $x' \in C$. First notice that each friend of~$x'$ has at most three friends in~$C$. Thus all friends of~$x'$ in~$C$ need to have \emph{exactly} three friends in~$C$, as otherwise the utility of~$x'$ in~$\Gamma_{C \to \emptyset}$ would be 
lower than~$\frac{2+3(\ell-1)}{\ell}\cdot n=3n-\frac{n}{\ell} <3n-k' < 
\util_{x'}^\avgAL(\Gamma)$
where $\ell$ denotes the number of friends that~$x'$ has in~$C$; 
notice that the inequality $n>k'\cdot \ell$ we use here follows from the fact that $x$ must have degree at least~$\ell-1$ in~$H$. 
In particular, no mid player in the vertex gadget containing~$x'$ can be in~$C$. Furthermore, whenever $b_{x,e} \in C$ for some edge~$e$ incident to~$x$ in~$H$, then $e' \in C$ follows. This means that the degree of~$x$ in~$H_C$ is exactly the number of friends that $x'$ has in~$C$. 

Consider now
the utility of some incidence player
$b_{x,e}$ in~$C$: If $x'$ has at most~$k-2$ friends in~$C$, then
\[
\util_{b_{x,e}}^\avgAL(\Gamma_{C \to \emptyset}) \leq
\frac{2+3+k-2}{3} \cdot n <
\frac{k+4}{3}\cdot n-k'<
\util_{b_{x,e}}^\avgAL(\Gamma)
\]
where the first strict inequality follows from
$n>3k'$ which in turn holds due to our assumptions on the size of the graph~$H$.
This proves that $x'$ needs to have at least~$k-1$ friends in~$C$, 
that is, every vertex in~$H_C$ has degree at least~$k-1$. 
Finally, observe that the inequality~(\ref{eqn:bound-on-blockingcoal}) holds; using also Claim~\ref{clm:pinched-dome}, we can apply Observation~\ref{obs:clique} with $\alpha=3$. It follows that $H_C$ is a clique of size~$k$.

\proofif
Given a
clique of size~$k$ in~$H$, proving that the corresponding vertex, edge, and incidence players form a
coalition blocking~$\Gamma$ is a straightforward adaptation of the arguments presented in the proof of Theorem~\ref{thm:avgEQ-corestable-coNPc}.
\end{proof}

\section{Conclusions and Open Questions}

Having solved the last four open problems related to the computational complexity of verifying core stability in altruistic hedonic games, the picture for this property is now complete: For all three degrees of altruism, in both the average-based and the minimum-based case, and for both altruistic hedonic games and the more general altruistic CFGs, it is $\conp$-complete to verify whether a given coalition structure is core-stable.
Of course, many related problems remain open and can be tackled in future research.

For example, we only have an upper bound of containment in~$\conp$ for verifying \emph{strict} core stability, and it remains to show a matching lower bound of $\conp$-hardness.
A coalition structure $\Gamma$ is \emph{strictly core-stable} if it is not weakly blocked by any coalition, i.e., for each coalition $C \subseteq N$, we either have $\Gamma(i) \succ_i \Gamma_{C\rightarrow\emptyset}(i)$ for some player $i \in C$, or we have $\Gamma(i) \sim_i \Gamma_{C\rightarrow\emptyset}(i)$ for all players $i \in C$.
Furthermore, the existence problems for core stability and strict core stability have not been classified in terms of their complexity yet, in any of the models of AHGs or ACFGs we have considered here.

In addition, the existence problems for other properties of AHGs or ACFGs remain open as well, such as the existence of a (strictly) popular coalition structure, even though the corresponding verification problems have recently been settled for AHGs~\cite{ker-rot:j:complexity-of-verifying-popularity-and-strict-popularity-in-altruistic-hedonic-games}.
A coalition structure $\Gamma$ is \emph{popular} if it is preferred to any other coalition structure $\Delta$ by at least as many players as there are players preferring $\Delta$ to~$\Gamma$.

Finally, in addition to classical complexity, it would be very interesting to study these problems in terms of their fixed-parameter tractability and parameterized complexity (see, e.g., the work of Chen \emph{et al.}~\cite{che-csa-roy-sim:c:hedonic-games-with-friends-enemies-and-neutrals}) or in terms of their approximability (see, e.g., the work of Munagala \emph{et al.}~\cite{mun-she-wan:c:auditing-for-core-stability-in-participatory-budgeting} who study core stability in the context of participatory budgeting).

\subsection*{Acknowledgments}

This work was supported in part by Deutsche Forschungsgemeinschaft under DFG research grant \linebreak[4]RO\nobreakdash-1202/21\nobreakdash-2 (project 438204498) and by the Hungarian Academy of Sciences under its Momentum Programme (LP2021\nobreakdash-2) and its J{\'a}nos Bolyai Research Scholarship.

\bibliographystyle{eptcs}
\bibliography{joergbib}



\end{document}